\newcommand{\keywords}[1]{\par\addvspace\baselineskip
\noindent\keywordname\enspace\ignorespaces#1}
\let\doendproof\endproof
\renewcommand\endproof{~\hfill\qed\doendproof}
\begin{document}

\mainmatter  

\title{Existence of Evolutionarily Stable Strategies Remains Hard to Decide for a Wide Range of Payoff Values}

\titlerunning{Existence of ESS remains hard to decide for a wide range of payoff values}

%
%
\author{Themistoklis Melissourgos%
\and Paul Spirakis}
%

\institute{Department of Computer Science, University of Liverpool,\\
Ashton Street, Liverpool L69 3BX, United Kingdom\\
\mailsa
}

%
%

%
\maketitle

\begin{abstract}
The concept of an \emph{evolutionarily stable strategy} (ESS), introduced by Smith and Price~\cite{SP}, is a refinement of Nash equilibrium in 2-player symmetric games in order to explain counter-intuitive natural phenomena, whose existence is not guaranteed in every game. The problem of deciding whether a game possesses an ESS has been shown to be \textbf{$\Sigma_{2}^{P}$}-complete by Conitzer~\cite{VC} using the preceding important work by Etessami and Lochbihler~\cite{EL}. The latter, among other results, proved that deciding the existence of ESS is both \textbf{NP}-hard and \textbf{coNP}-hard. In this paper we introduce a \emph{reduction robustness} notion and we show that deciding the existence of an ESS remains \textbf{coNP}-hard for a wide range of games even if we arbitrarily perturb within some intervals the payoff values of the game under consideration. In contrast, ESS exist almost surely for large games with random and independent payoffs chosen from the same distribution~\cite{HRW}.
\keywords{Game theory, Computational complexity, Evolutionarily stable strategies, Robust reduction}
\end{abstract}

\section{Introduction}

%
%

\subsection{Concepts of Evolutionary Games and Stable Strategies}

Evolutionary game theory has proven itself to be invaluable when it comes to analysing complex natural phenomena. A first attempt to apply game theoretic tools to evolution was made by Lewontin~\cite{RCL} who saw the evolution of genetic mechanisms as a game played between a species and nature. He argued that a species would adopt the ``maximin'' strategy, i.e. the strategy which gives it the best chance of survival if nature does its worst. Subsequently, his ideas where improved by the seminal work of Smith and Price in~\cite{SP} and Smith in~\cite{JMS} where the study of natural selection's processes through game theory was triggered. They proposed a model in order to decide the outcome of groups consisting of living individuals, conflicting in a specific environment. 

The key insight of evolutionary game theory is that a set of behaviours depends on the interaction among multiple individuals in a population, and the prosperity of any one of these individuals depends on that interaction of its own behaviour with that of the others. An \textbf{evolutionarily stable strategy (ESS)} is defined as follows: An infinite population consists of two types of infinite groups with the same set of pure strategies; the \emph{incumbents}, that play the (mixed) strategy $s$ and the \emph{mutants}, that play the (mixed) strategy $t \neq s$. The ratio of mutants over the total population is $\epsilon$. A pair of members of the total population is picked uniformly at random to play a finite symmetric bimatrix game $\Gamma$ with payoff matrix $A_\Gamma$. Strategy $s$ is an ESS if for every $t \neq s$ there exists a constant ratio $\epsilon_t$ of mutants over the total population, such that, if $\epsilon < \epsilon_t$ the expected payoff of an incumbent versus a mutant is strictly greater than the expected payoff of a mutant versus a mutant. For convenience, we say that ``$s$ is an ESS of the game $\Gamma$''.

The concept of ESS tries to capture resistance of a population against invaders. This concept has been studied in two main categories: infinite population groups and finite population groups. The former was the one where this Nash equilibrium refinement was first defined and presented by~\cite{SP}. The latter was studied by Schaffer~\cite{MES} who shows that the finite population case is a generalization of the infinite population one. The current paper deals with the infinite population case which can be mathematically modelled in an easier way and in addition, its results may provide useful insight for the finite population case.

\paragraph{An example.} In order for the reader to conceive the notion of the evolutionarily stable strategy, we give a most explanatory example of the infinite population case. Let us consider a particular species of crab and suppose that each crab's fitness in a specific environment is mainly decided by its capability to find food and use the nutrients from the food in an efficient way. In our crab population a particular mutation makes its appearance, so the crabs born with the mutation grow a significantly larger body size. We can picture the population now, consisting of two distinct kinds of crabs; $\epsilon$ fraction of the population being the large ones and $1-\epsilon$ being the small ones. The large crabs, in fact, have difficulty maintaining the metabolic requirements of their larger body structure, meaning that they need to divert more nutrients from the food they eat and as a consequence, they experience a negative effect on fitness. However, the large crabs have an advantage when it comes to conflicting with the small ones, so they claim an above-average share of the food. To make our framework simple, we will assume that food competition involves pairs of crabs, drawn at random, interacting with each other once, but the reasoning of the analysis is equivalent to interactions that occur (simultaneously or not) between every possible pair, with each individual receiving the mean of the total fitness. When two crabs compete for food, we have the following ``rules'' that apply:  (1) When crabs of the same body size compete, they get equal shares of the food. (2) When a large crab competes with a small crab, the large one gets the majority of the food. (3) In all cases, large crabs experience less of a fitness benefit from a given quantity of food, since some of it is diverted into maintaining their expensive metabolism. (4) When two large crabs compete they experience even less of a fitness benefit as they put considerable effort in fighting. The following bimatrix encloses the rules above in the context of a game.

\begin{center}
\begin{game}{2}{2}[Crab 1][Crab 2]
			& $Small$	& $Large$\\
$Small$   	& $7,7$		& $1,9$\\
$Large$   	& $9,1$		& $4,4$\\
\end{game}
\end{center}

In this setting, we call a given strategy evolutionarily stable if, when the whole population is using this strategy, any small enough group of invaders using a different strategy will eventually die off over multiple generations. This idea is captured in terms of numerical payoffs by saying that, when the entire population is using a strategy $s$, then an arbitrarily small ratio of invaders over the new (blended) population will have strictly lower fitness than the initial population has in the new population. Since fitness translates into reproductive success, and consequently transmitting ones genes to future generations at higher frequencies, strictly lower fitness is assumed by evolutionary principles~\cite{SP} that the reason for a subpopulation (like the users of strategy $t$) to shrink over time through multiple generations and eventually become extinct.

Let us see if any of the two pure strategies is evolutionarily stable. Suppose a population of small crabs gets invaded by a group of large ones (of ratio $\epsilon$ over the whole population). The expected payoff (fitness) of a small crab is:
\begin{align*}
7(1-\epsilon) + 1\epsilon = 7 - 6\epsilon   \qquad &\text{because it meets a small crab with probability}\\
&\text{$1-\epsilon$ and a large one with probability $\epsilon$.}
\end{align*}
The expected payoff of a large crab is:
\begin{align*}
9(1-\epsilon) + 4\epsilon = 9 - 5\epsilon   \qquad &\text{because it meets a small crab with probability}\\ 
&\text{$1-\epsilon$ and a large one with probability $\epsilon$.}
\end{align*}
Clearly, no $\epsilon$ can make the payoff of the small crabs greater than that of the large ones. So, the pure strategy \emph{Small} is not an ESS. Now suppose a population of large crabs gets invaded by a group of small ones (of ratio $\epsilon$ over the whole population). The expected payoff (fitness) of a large crab is:
\begin{align*}
4(1-\epsilon) + 9\epsilon = 4 + 5\epsilon   \qquad &\text{because it meets a large crab with probability}\\
&\text{$1-\epsilon$ and a small one with probability $\epsilon$.}
\end{align*}
The expected payoff of a small crab is:
\begin{align*}
1(1-\epsilon) + 7\epsilon = 1 + 6\epsilon   \qquad &\text{because it meets a large crab with probability}\\
&\text{$1-\epsilon$ and a small one with probability $\epsilon$.}
\end{align*}
In this case, for every $\epsilon \in (0,1)$ the payoff of the large crabs is greater than that of the small ones. So, the pure strategy \emph{Large} is an ESS.

The concept of ESSs can also be extended to mixed strategies. We can think of three natural ways to interpret the notion of probability assignment on the pure strategies of a population. One is, each individual is preprogrammed (through its DNA) to play just a specific pure strategy from a set of strategies and we say that individuals with the same pure strategy are of the same \emph{type}. The group of individuals can be considered to behave as a player with a mixed strategy, defined as a probability vector over the pure strategies used by the group. Each pure strategy's probability equals the ratio of its type's members over the total population (type's \emph{frequency}), because of the simple assumption made, that when two groups conflict one individual from each group is drawn equiprobably to play a bimatrix game. Another one is, each individual is preprogrammed to play a particular mixed strategy. Thus, whoever is drawn will play the specific mixed strategy. The last one is the most general way to think of it, as a blend of the former cases. A group's mixed strategy is defined by its probabilities over the available pure strategies. As soon as one individual is equiprobably picked from each group, the probability over a pure strategy of a group is determined by the sum of the probability each type is picked times the probability this type plays the specific pure strategy. Referring to our previous example, the following three infinite populations of crabs are equivalent: (i) One with $2/3$ of type \emph{Small} and $1/3$ of type \emph{Large}. (ii) One with every crab playing the mixed strategy [$2/3$: \emph{Small}, $1/3$: \emph{Large}]. (iii) One with $1/4$ of type \emph{Small}, $1/4$ playing the mixed strategy [$1/6$: \emph{Small}, $5/6$: \emph{Large}] and $1/2$ playing the mixed strategy [$3/4$: \emph{Small}, $1/4$: \emph{Large}]. Of course in the particular example the individuals cannot have mixed strategies, each one is committed to have a body size for life, but the reasoning holds for other games with strategies that do not exclude each other such as in the Stag-Hunt game. We should mention here, that some games such as Hawk-Dove do not have a pure ESS, but they have a mixed ESS. Other games do not have either.

\subsection{Previous Work}

Searching for the exact complexity of deciding if a bimatrix game possesses an ESS, Etessami and Lochbihler~\cite{EL} invent a nice reduction from the complement of the \textsc{clique} problem to a specific game with an appointed ESS, showing that the \textsc{ess} problem is \textbf{coNP}-hard. They also accomplish a reduction from the \textsc{sat} problem to \textsc{ess}, thus proving that \textsc{ess} is \textbf{NP}-hard too. This makes impossible for the \textsc{ess} to be \textbf{NP}-complete, unless \textbf{NP}=\textbf{coNP}. Furthermore, they provide a proof for the general \textsc{ess} being contained in \pmb{$\Sigma_{2}^{P}$}, the second level of the polynomial-time hierarchy, leaving open the question of what is the complexity class in which the problem is complete.

A further improvement of those results was made by Nisan~\cite{NN}, showing that, given a payoff matrix, the existence of a mixed ESS is \textbf{coDP}-hard. \textbf{DP} is the complexity class, introduced by Papadimitriou and Yannakakis~\cite{PY84}, consisting of all languages $L$ where $L = L_{1} \cap L_{2}$ and $L_{1}$ is in \textbf{NP} and $L_{2}$ is in \textbf{coNP}. Therefore, \textbf{coDP} is the complexity class consisting of all the complement languages of $L$, denoted by $\bar{L}$, where $\bar{L} = \bar{L_{1}} \cup \bar{L_{2}}$ and $\bar{L_{1}}$ is in \textbf{coNP} and $\bar{L_{2}}$ is in \textbf{NP}. Clearly, \textbf{NP} $\subseteq$ \textbf{coDP} , \textbf{coNP} $\subseteq$ \textbf{coDP} and \textbf{coDP} $\subseteq$ \pmb{$\Sigma_{2}^{P}$}. The hardness result is due to a relatively simple reduction from the \textbf{coDP}-complete problem co-\textsc{exact-clique}(for the definition see \cite{PY84}), to \textsc{ess}. A notable consequence of both~\cite{EL} and~\cite{NN} is that the problem of \emph{recognizing} a mixed ESS, once given along with the payoff matrix, is \textbf{coNP}-complete. However, the question of the exact complexity of ESS existence, given the payoff matrix, remained open. A few years later, Conitzer finally settles this question in~\cite{VC}, showing that \textsc{ess} is actually \pmb{$\Sigma_{2}^{P}$}-complete.

On the contrary, Hart et al. \cite{HRW} showed that if the symmetric bimatrix game defined by a $n \times n$ payoff matrix with elements independently randomly chosen according to a distribution $F$ with exponential and faster decreasing tail, such as exponential, normal or uniform, then the probability of having an ESS with just 2 pure strategies in the support tends to 1 as $n$ tends to infinity. In view of this result, and since the basic reduction of \cite{EL} used only 3 payoff values, it is interesting to consider whether ESS existence remains hard for arbitrary payoffs in some intervals.

\subsection{Our Results}

In the reduction of Etessami and Lochbihler that proves \textbf{coNP}-hardness of \textsc{ess} the values of the payoffs used, are $0, \frac{k-1}{k}$ and $1$, for $k \in \mathbb{N}$. A natural question is if the hardness results hold when we \textbf{arbitrarily} perturb the payoff values within respective intervals (in the spirit of smoothed analysis \cite{DSST}). In our work we extend the aforementioned reduction and show that the specific reduction remains valid even after significant changes of the payoff values.

We can easily prove that the evolutionarily stable strategies of a symmetric bimatrix game remain the exact same if we add, subtract or multiply (or do all of them) with a positive value its payoff matrix. However, that kind of value modification forces the entries of the payoff matrix to change in an entirely correlated manner, hence it does not provide an answer to our question. In this work, we prove that if we have partitions of entries of the payoff matrix with the same value for each partition, independent arbitrary perturbations of those values within certain intervals do not affect the validity of our reduction. In other words, we prove that determining ESS existence remains hard even if we perturb the payoff values associated with the reduction. En route we give a definition of ``reduction robustness under arbitrary perturbations'' and show how the reduction under examination adheres to this definition.

In contrast,~\cite{HRW} show that if the payoffs of a symmetric game are random and independently chosen from the same distribution $F$ with ``exponential or faster decreasing tail'' (e.g. exponential, normal or uniform), then an ESS (with support of size 2) exists with probability that tends to 1 when $n$ tends to infinity. 

One could superficially get a non-tight version of our result by saying that (under supposed continuity assumptions in the ESS definition) any small perturbation of the payoff values will not destroy the reduction. However, in such a case (a) the continuity assumptions have to be precisely stated and (b) this does not explain why  the ESS problem becomes easy when the payoffs are random~\cite{HRW}.

In fact, the value of our technique is, firstly, to get as tight as possible ranges of the perturbation that preserve the reduction (and the ESS hardness) without any continuity assumptions, secondly, to indicate the basic difference from random payoff values (which is exactly the notion of partition of payoffs into groups in our definition of robustness, and the allowance of arbitrary perturbation within some interval in each group), and finally, the ranges of the allowed perturbations that we determine are quite tight. For the reduction to be preserved when we independently perturb the values (in each of our partitions arbitrarily), one must show that a system of inequalities has always a feasible solution, and we manage to show this in our final theorem. Our result seems to indicate that existence of an ESS remains hard despite a smoothed analysis \cite{DSST}.

An outline of the paper is as follows: In Section \ref{Robust reductions} we define the robust reduction notion and we provide a first extension of the aforementioned reduction by \cite{EL}. In Section \ref{Extending the reduction} we provide another extended reduction, based on the one from\cite{EL}, that is essentially modified in order to be robust. In Section \ref{Main result} we give our main result and Section \ref{Conclusions} refers to further work and conclusions.

\subsection{Definitions and Notation}

\subsubsection{Background from game theory.}

A \emph{finite two-player strategic form game} $\Gamma = (S_{1}, S_{2}, u_{1}, u_{2})$ is given by finite sets of pure strategies $S_1$ and $S_2$ and utility, or \emph{payoff}, functions $u_1 : S_{1} \times S_{2} \mapsto \mathbb{R}$ and $u_2 : S_{1} \times S_{2} \mapsto \mathbb{R}$ for the row-player and the column-player, respectively. Such a game is called \emph{symmetric} if $S_{1} = S_{2} =: S$ and $u_{1}(i,j) = u_{2}(j,i)$ for all $i,j \in S$. 

In what follows, we are only concerned with finite symmetric two-player strategic form games, so we write $(S, u_1)$ as shorthand for $(S, S, u_{1}, u_{2})$, with $u_{2}(j,i) = u_{1}(i,j)$ for all $i,j \in S$. For simplicity assume $S = {1,...,n}$, i.e., pure strategies are identified with integers $i, 1 \leq i \leq n$. The row-player's \emph{payoff matrix} $A_{\Gamma} = (a_{i,j})$ of $\Gamma = (S, u_1)$ is given by $a_{i,j} = u_{1}(i,j)$ for $i,j \in S$, so $B_{\Gamma}=A_{\Gamma}^{T}$ is the payoff matrix of the column-player. Note that $A_{\Gamma}$ is not necessarily symmetric, even if $\Gamma$ is a symmetric game. 

A \emph{mixed strategy} $s = (s(1),...,s(n))^T$ for $\Gamma = (S, u_1)$ is a vector that defines a probability distribution on $s$ and, in the sequel, we will denote by $s(i)$ the probability assigned by strategy $s$ on the pure strategy $i \in S$. Thus, $s \in X$, where $X = \Big\{s \in \mathbb{R}_{\geq 0}^{n} : \sum_{i=1}^{n}s(i) = 1 \Big\}$ denotes the set of mixed strategies in $\Gamma$, with $\mathbb{R}_{\geq 0}^{n}$ denoting the set of non-negative real number vectors $(x_{1},x_{2},..,x_{n})$. $s$ is called \emph{pure} iff $s(i) = 1$ for some $i \in S$. In that case we identify $s$ with $i$. For brevity, we generally use ``strategy'' to refer to a mixed strategy $s$, and indicate otherwise when the strategy is pure. In our notation, we alternatively view a mixed strategy $s$ as either a vector $(s_{1},...,s_{n})^T$, or as a function $s : S \mapsto \mathbb{R}$, depending on which is more convenient in the context.

The \emph{expected payoff} function, $U_k : X \times X \mapsto \mathbb{R}$ for player $k \in {1,2}$ is given by $U_{k}(s,t) = \sum_{i,j \in S}s(i)t(j)u_{k}(i,j)$, for all $s,t \in X$. Note that $U_{1}(s,t) = s^{T}A_{\Gamma}t$ and  $U_{2}(s,t) = s^{T}A_{\Gamma}^{T}t$. Let $s$ be a strategy for $\Gamma = (S, u_1)$. A strategy $t \in X$ is a \emph{best response} to $s$ if $U_{1}(t,s) = \max_{t' \in X}U_{1}(t',s)$. The \emph{support} supp($s$) of $s$ is the set $\{ i \in S : s(i) > 0 \}$ of pure strategies which are played with non-zero probability. The \emph{extended support} ext-supp($s$) of $s$ is the set $\{ i \in S :U_{1}(i,s) = \max_{x \in X}U_{1}(x,s) \}$ of all pure best responses to $s$.

A pair of strategies $(s,t)$ is a \emph{Nash equilibrium} (NE) for $\Gamma$ if $s$ is a best response to $t$ and $t$ is a best response to $s$. Note that $(s,t)$ is a NE if and only if supp($s$)$\subseteq$ ext-supp($t$) and supp($t$)$\subseteq$ ext-supp($s$). A NE $(s,t)$ is \emph{symmetric} if $s = t$.

\begin{definition}[Symmetric Nash equilibrium]
A strategy profile $(s,s)$ is a symmetric NE for the symmetric bimatrix game $\Gamma = (S, u_1)$ if $s^{T}A_{\Gamma}s \geq t^{T}A_{\Gamma}s$ for every $t \in X$.
\end{definition}

A definition of ESS equivalent to that presented in Subsection 1.1 is:

\begin{definition}[Evolutionarily stable strategy]
A (mixed) strategy $s \in X$ is an evolutionarily stable strategy (ESS) of a two-player symmetric game $\Gamma$ if:
\begin{enumerate}
\item $(s,s)$ is a symmetric NE of $\Gamma$, and
\item if $t \in X$ is any best response to $s$ and $t \neq s$, then $U_{1}(s,t) > U_{1}(t,t)$.
\end{enumerate}
\end{definition}

Due to~\cite{JFN}, we know that every symmetric game has a symmetric Nash equilibrium. The same does not hold for evolutionarily stable strategies (for example ``rock-paper-scissors'' does not have any pure or mixed ESS).

\begin{definition}[{\normalfont\scshape \pmb{ess}} problem]
Given a symmetric two-player normal-form game $\Gamma$, we are asked whether there exists an evolutionarily stable strategy of $\Gamma$.
\end{definition}

\subsubsection{Background from graph theory.}

An \emph{undirected graph} $G$ is an ordered pair $(V,E)$ consisting of a set $V$ of \emph{vertices} and a set $E$, disjoint from $V$, of \emph{edges}, together with an \emph{incidence function} $\psi_G$ that associates with each edge of $G$ an unordered pair of distinct vertices of $G$. If $e$ is an edge and $u$ and $\upsilon$ are vertices such that $\psi_{G}(e) = \{u, \upsilon \}$, then $e$ is said to \emph{join} $u$ and $\upsilon$, and the vertices $u$ and $\upsilon$ are called the \emph{ends} of $e$. We denote the numbers of vertices and edges in $G$ by $\upsilon(G)$ and $e(G)$; these two basic parameters are called the \emph{order} and \emph{size} of $G$, respectively.

\begin{definition}[Adjacency matrix]
The adjacency matrix of the above undirected graph $G$ is the $n \times n$ matrix $A_G := (a_{u \upsilon})$, where $a_{u \upsilon}$ is the number of edges joining vertices $u$ and $\upsilon$ and $n = \upsilon(G)$.
\end{definition}

\begin{definition}[Clique]
A clique of an undirected graph $G$ is a complete subgraph of $G$, i.e. one whose vertices are joined with each other by edges.
\end{definition}

\begin{definition}[{\normalfont\scshape \pmb{clique}} problem]
Given an undirected graph $G$ and a number $k$, we are asked whether there is a clique of size $k$.
\end{definition}

As mentioned earlier, in what follows, $\mathbb{R}_{\geq 0}^{n}$ denotes the set of non-negative real number vectors $(x_{1},x_{2},..,x_{n})$ and $n=|V|$.

\begin{theorem}[Motzkin and Straus~\cite{MS}]\label{MS}
Let $G = (V,E)$ be an undirected graph with maximum clique size $d$. Let $\Delta_1 = \Big \{ x \in \mathbb{R}_{\geq 0}^{n} : \sum_{i=1}^{n} x_{i} = 1 \Big \}$. Then $\max_{x \in \Delta_1} x^{T}A_{G}x = \frac{d-1}{d}$.
\end{theorem}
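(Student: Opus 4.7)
The plan is to prove the two inequalities $\max_{x \in \Delta_1} x^T A_G x \geq (d-1)/d$ and $\max_{x \in \Delta_1} x^T A_G x \leq (d-1)/d$ separately, assuming $G$ is simple so that $A_G$ has zero diagonal and $\{0,1\}$ off-diagonal entries.

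For the lower bound, I would exhibit an explicit feasible point. Fix a clique $C \subseteq V$ of size $d$ (which exists by assumption) and define $x^* \in \Delta_1$ by $x^*_i = 1/d$ if $i \in C$ and $x^*_i = 0$ otherwise. A direct computation then gives
\[
x^{*T} A_G x^* \;=\; \sum_{i,j \in C,\, i \neq j} \frac{1}{d} \cdot \frac{1}{d} \;=\; \frac{d(d-1)}{d^2} \;=\; \frac{d-1}{d}.
\]

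For the upper bound, I would use a support-shrinking argument. Let $x^*$ be a maximizer of $f(x) := x^T A_G x$ on the compact set $\Delta_1$ and let $S = \mathrm{supp}(x^*)$. The key claim is that we may assume $S$ induces a clique in $G$. If instead there exist $i,j \in S$ with $\{i,j\} \notin E$, consider the perturbation $x(\epsilon) = x^* + \epsilon(e_i - e_j)$, which lies in $\Delta_1$ for $\epsilon \in [-x^*_i, x^*_j]$. Since $a_{ii} = a_{jj} = a_{ij} = 0$, expanding yields
\[
f(x(\epsilon)) - f(x^*) \;=\; 2\epsilon\bigl((A_G x^*)_i - (A_G x^*)_j\bigr).
\]
Because $\epsilon$ can be positive or negative, maximality forces $(A_G x^*)_i = (A_G x^*)_j$, whereupon $f$ is constant along the entire segment. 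Choosing $\epsilon = x^*_j$ moves all mass off vertex $j$ and yields another maximizer with strictly smaller support; iterating produces a maximizer $\tilde{x}$ whose support $\tilde S$ is a clique of size $k := |\tilde S| \leq d$.

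It then remains to bound $f(\tilde x)$. On a clique of size $k$,
\[
\tilde x^T A_G \tilde x \;=\; \Bigl(\sum_{i \in \tilde S} \tilde x_i\Bigr)^{\!2} - \sum_{i \in \tilde S} \tilde x_i^2 \;=\; 1 - \sum_{i \in \tilde S} \tilde x_i^2 \;\leq\; 1 - \frac{1}{k} \;\leq\; \frac{d-1}{d},
\]
where the first inequality is Cauchy--Schwarz ($\sum \tilde x_i^2 \geq (\sum \tilde x_i)^2/k = 1/k$) and the second follows from $k \leq d$. I expect the support-shrinking step to be the main obstacle: one must verify that at a maximizer the directional derivative along $e_i - e_j$ vanishes for non-adjacent $i,j \in S$, so that the objective is preserved when all mass is pushed to one side. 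This is precisely where the combinatorial structure of $A_G$ (symmetry, zero diagonal, and zero entries on non-edges) is essential, and it is what makes the induction on $|S|$ go through cleanly.
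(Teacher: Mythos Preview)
Your proof is correct and is essentially the classical Motzkin--Straus argument: exhibit the uniform distribution on a maximum clique for the lower bound, and for the upper bound use the mass-shifting step along $e_i - e_j$ for non-adjacent $i,j$ in the support (crucially exploiting that the quadratic term $(e_i - e_j)^T A_G (e_i - e_j)$ vanishes) to reduce to a clique-supported maximizer, followed by Cauchy--Schwarz.

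Note, however, that the paper does not prove this theorem at all: it is stated as a cited result from~\cite{MS} and used as a black box (the paper only proves the two easy corollaries that follow by affine rescaling). So there is no ``paper's own proof'' to compare against here; you have simply supplied the standard proof of a quoted classical result, and it is fine.
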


\begin{corollary}{\label{C_MS}}
Let $G = (V,E)$ be an undirected graph with maximum clique size $d$. Let $A_{G}^{\tau,\rho}$ be a modified adjacency matrix of graph $G$ where its entries with value 0 are replaced by $\tau \in \mathbb{R}$ and its entries with value 1 are replaced by $\rho \in \mathbb{R}$. Let $\Delta_1 = \Big \{ x \in \mathbb{R}_{\geq 0}^{n} : \sum_{i=1}^{n} x_{i} = 1 \Big \}$. Then $\max_{x \in \Delta_1} x^{T}A_{G}^{\tau,\rho}x = \tau + (\rho-\tau)\frac{d-1}{d}$.
\end{corollary}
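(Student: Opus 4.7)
The natural plan is to reduce the claim to the original Motzkin--Straus theorem (Theorem \ref{MS}) by a simple algebraic decomposition of the modified matrix. Specifically, I would write
\[
A_G^{\tau,\rho} = \tau\, J + (\rho - \tau)\, A_G,
\]
where $J$ is the $n \times n$ all-ones matrix. This identity holds entry-wise: where $A_G$ has a $1$ (an edge), the right-hand side gives $\tau + (\rho - \tau) = \rho$; where $A_G$ has a $0$ (a non-edge or the diagonal), it gives $\tau$. This exactly matches the definition of $A_G^{\tau,\rho}$.

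Next, for any $x \in \Delta_1$ I would exploit the probability-simplex constraint $\sum_{i=1}^n x_i = 1$ to compute
\[
x^T J x = \Bigl(\sum_{i=1}^n x_i\Bigr)^{\!2} = 1.
\]
Combining this with the decomposition above yields
\[
x^T A_G^{\tau,\rho} x = \tau\, x^T J x + (\rho - \tau)\, x^T A_G x = \tau + (\rho - \tau)\, x^T A_G x,
\]
which is an affine function of $x^T A_G x$ with slope $(\rho - \tau)$.

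Finally, assuming $\rho \geq \tau$ (which is the case of interest in the reduction, since edge entries play the role of higher payoffs than non-edge entries), maximizing $x^T A_G^{\tau,\rho} x$ over $\Delta_1$ is equivalent to maximizing $x^T A_G x$ over $\Delta_1$. Invoking Theorem \ref{MS} directly, the latter maximum equals $\tfrac{d-1}{d}$, so
\[
\max_{x \in \Delta_1} x^T A_G^{\tau,\rho} x = \tau + (\rho - \tau)\,\frac{d-1}{d},
\]
as claimed.

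There is no serious obstacle here; the only subtlety worth flagging is the role of the sign of $\rho - \tau$. For $\rho < \tau$ the formula would instead describe a minimum rather than a maximum of $x^T A_G^{\tau,\rho} x$, so one should either restrict to $\rho \geq \tau$ (the relevant regime for the ESS reduction, where edge-entries dominate non-edge-entries) or state the result as an identity that follows by simplex-constrained linearization together with the Motzkin--Straus optimum. In either formulation the proof is a two-line computation plus an application of Theorem \ref{MS}.
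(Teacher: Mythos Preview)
Your approach is essentially identical to the paper's own proof: the paper writes $A_G^{\tau,\rho} = \tau\cdot\mathbf{1} + (\rho-\tau)\cdot A_G$ (with $\mathbf{1}$ denoting the all-ones matrix), uses $x^T\mathbf{1}x = 1$ on $\Delta_1$, and then invokes Theorem~\ref{MS}. Your additional remark about the sign of $\rho-\tau$ is well taken --- the paper states the corollary for arbitrary $\tau,\rho\in\mathbb{R}$ but its one-line proof, like yours, only yields the claimed maximum when $\rho\geq\tau$; this is harmless for the application, since the reductions in Theorems~\ref{EL_1} and~\ref{EL_x} always assume $\tau<\rho$.
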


\begin{proof}
\begin{align*}
x^{T}A_{G}^{\tau,\rho}x & = x^{T}\left[\tau \cdot \mathbf{1} + (\rho-\tau) \cdot A_G\right]x & \text{, where } \mathbf{1} \text{ is the } n \times n \text{ matrix with value 1} \\
& & \text{in every entry.} \\
& = \tau + (\rho-\tau) \cdot x^{T}A_{G}x & \text{, and by Theorem \ref{MS} the result follows.}
\end{align*}
\end{proof}

\begin{corollary}[Etessami and Lochbihler~\cite{EL}]{\label{C_EL}}
Let $G = (V,E)$ be an undirected graph with maximum clique size $d$ and let $l \in \mathbb{R}_{\geq 0}$. Let $\Delta_l = \Big \{ x \in \mathbb{R}_{\geq 0}^{n} : \sum_{i=1}^{n} x_{i} = l \Big \}$. Then $\max_{x \in \Delta_l}x^{T}A_{G}x = \frac{d-1}{d}l^2$.
\end{corollary}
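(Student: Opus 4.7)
The plan is to reduce Corollary \ref{C_EL} to the already-established Theorem \ref{MS} by a simple linear rescaling of the simplex. The key observation is that multiplication by the positive scalar $l$ (assuming $l>0$; the case $l=0$ is trivial since then $x=\mathbf{0}$ is forced and both sides vanish) gives a bijection between $\Delta_1$ and $\Delta_l$, and the quadratic form $x^T A_G x$ is homogeneous of degree $2$ in $x$.

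First I would handle the degenerate case $l=0$ in one line: $\Delta_0 = \{\mathbf{0}\}$, so $\max_{x\in\Delta_0} x^T A_G x = 0 = \frac{d-1}{d}\cdot 0^2$, as required. Then, for $l>0$, I would introduce the change of variables $x = l\cdot y$. If $x \in \Delta_l$, then $y = x/l \in \mathbb{R}_{\geq 0}^n$ and $\sum_i y_i = \frac{1}{l}\sum_i x_i = 1$, so $y \in \Delta_1$; conversely, for any $y \in \Delta_1$, the vector $x = l y$ lies in $\Delta_l$. Hence the map $y \mapsto l y$ is a bijection $\Delta_1 \to \Delta_l$.

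Next I would use the bilinearity of the form to write
\begin{equation*}
x^T A_G x \;=\; (l y)^T A_G (l y) \;=\; l^2\, y^T A_G y.
\end{equation*}
Taking the supremum on both sides over the (compact) domain and applying Theorem \ref{MS} yields
\begin{equation*}
\max_{x \in \Delta_l} x^T A_G x \;=\; l^2 \max_{y \in \Delta_1} y^T A_G y \;=\; l^2 \cdot \frac{d-1}{d},
\end{equation*}
which is exactly the claimed identity.

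I do not anticipate any real obstacle here: the result is purely a homogeneity/rescaling statement, so no new combinatorial input beyond Motzkin--Straus is needed. The only minor care is to note that the rescaling is a bijection (so maxima correspond) and that $l^2 \geq 0$ preserves the direction of the maximum, both of which are immediate.
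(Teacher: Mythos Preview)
Your proposal is correct and is exactly the natural homogeneity/rescaling argument one would expect; the paper itself does not spell out a proof of this corollary (it is attributed to \cite{EL} and stated without proof), so your derivation from Theorem~\ref{MS} via the bijection $y\mapsto ly$ between $\Delta_1$ and $\Delta_l$ is precisely what is implicitly intended.
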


\section{Robust Reductions}\label{Robust reductions}

\begin{definition}[Neighbourhood]  
Let $v \in \mathbb{R}$. An (open) interval $I(v)=[a,b]$ ($I(v)=(a,b)$) with $a<b$ where $a \leq v \leq b$, is called a neighbourhood of $v$ of range $|b-a|$.
\end{definition}

\begin{definition}[Robust reduction under arbitrary perturbations of values]
We are given a valid reduction of a problem to a strategic game that involves a real matrix $A$ of payoffs as entries $a_{ij}$. $A$ consists of $m$ partitions, with each partition's entries having the same value $v(t)$, for $t \in \{1,2,...m\}$. Let $I(v(t)) \neq \emptyset$ be a neighbourhood of $v(t)$ and $w(t) \in I(v(t))$ be an arbitrary value in that neighbourhood. The reduction is called robust under arbitrary perturbations of values if it is valid for all the possible matrices $W$ with entries $w(t)$.
\end{definition}

\subsection{A First Extension of the Reduction from the Complement of the \normalfont\scshape\pmb{clique} \normalfont\pmb{Problem to} \normalfont\scshape\pmb{ess}}\label{Extension}

In the sequel we extend the idea of K. Etessami and A. Lochbihler~\cite{EL} by giving sufficient conditions in order for the reduction to hold. We replace the zeros and ones of their reduction with $\tau>0$ and $\rho<1$ respectively.

Given an undirected graph $G=(V, E)$ we construct the following game $\Gamma_{k,\tau,\rho}(G) = (S, u_1)$ for $\lambda(k) = \frac{k-1}{k}$, where $k \in \mathbb{N}$, and suitable $0<\tau<\rho<1$ to be determined later. Note that from now on we will only consider rational $\tau$ and $\rho$ so that every payoff value of the game is rational.

$S=V \cup \{a,b,c\} $  are the strategies for the players where $a,b,c \notin V$.

$n= |V| $ is the number of nodes.

\begin{itemize}
\item $u_1(i,j)=\rho \text{ for all } i,j \in V \text{ with } (i,j) \in E $ .
\item $u_1(i,j)=\tau \text{ for all } i,j \in V \text{ with } (i,j) \notin E $ .
\item $u_1(z,a)=\rho \text{ for all } z \in S-\{b,c\} $ .
\item $u_1(a,i)=\lambda(k) \text{ for all } i \in V $ .
\item $u_1(y,i)=\rho \text{ for all } y \in \{b,c\} \text{ and } i \in V $ .
\item $u_1(y,a)=\tau \text{ for all } y \in \{b,c\} $ .
\item $u_1(z,y)=\tau \text{ for all } z \in S \text{ and } y \in \{b,c\} $ .
\end{itemize}

Here is an example of the payoff matrix of the strategic game derived from a graph with 3 nodes.

\begin{figure}
\begin{center}
\includegraphics{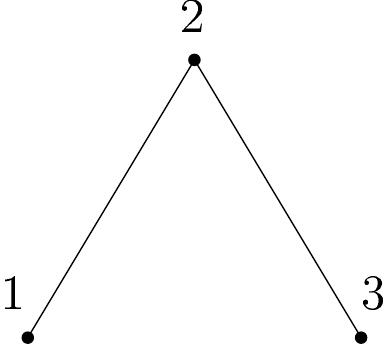}
\caption{The graph $G$.}
\end{center}
\end{figure}

Then the payoff matrix of the row-player is:
\begin{center}
\begin{game}{6}{6}
         & $a$     		& $b$				& $c$				& $1$			& $2$			& $3$\\
$a$   & $\rho$		& $\tau$			& $\tau$			& $(k-1)/k$		& $(k-1)/k$		& $(k-1)/k$\\
$b$   & $\tau$		& $\tau$			& $\tau$			& $\rho$		& $\rho$		& $\rho$\\
$c$   & $\tau$		& $\tau$			& $\tau$			& $\rho$		& $\rho$		& $\rho$\\
$1$   & $\rho$		& $\tau$			& $\tau$			& $\tau$		& $\rho$		& $\tau$\\
$2$   & $\rho$		& $\tau$			& $\tau$			& $\rho$		& $\tau$		& $\rho$\\	
$3$   & $\rho$		& $\tau$			& $\tau$			& $\tau$		& $\rho$		& $\tau$\\
\end{game}
\end{center}

~\\
The transpose of it is the payoff matrix of the column-player. In the sequel we shall use two corollaries of the Motzkin and Strauss theorem, namely, Corollary \ref{C_MS} and Corollary \ref{C_EL}.

\begin{theorem}\label{EL_1}
Let $G=(V, E)$ be an undirected graph. The game $\Gamma_{k,\tau,\rho}(G)$ with $\lambda(k) = \frac{k-1}{k}$ and

\begin{itemize}
\item $ \rho \in \Big(1 - \frac{4}{(n+1)^2},\quad 1 - \frac{1}{(n+1)^2}\Big] \quad \text{and} \quad  \tau \in \Big[(1 - \rho)(n-1),\quad \rho - (1 - \sqrt{1-\rho})^2\Big)  $ 

\text{or}
\item $ \rho \in \Big(1 - \frac{1}{(n+1)^2},\quad 1 \Big) \quad \text{and} \quad  \tau \in \Big[(1 - \rho)(n-1),\quad (1 - \rho)(n-1) + \frac{1}{n+1}\Big) $
\end{itemize}

 has an ESS if and only if G has no clique of size k.
\end{theorem}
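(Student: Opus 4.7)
The plan is to follow the Etessami--Lochbihler scheme with the pure strategy $a$ as the unique candidate ESS, and to use Corollary~\ref{C_MS} as the main lever that converts ``$G$ has (no) $k$-clique'' into a sharp bound on $x^T A_G^{\tau,\rho} x$ over the simplex. The two prescribed intervals for $(\rho,\tau)$ are designed so that the ``$a$ is an ESS'' condition and the ``$a$ is not an ESS'' condition end up on opposite sides of the critical threshold $\lambda(k) = (k-1)/k$.

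For the $(\Leftarrow)$ direction, assume $G$ has maximum clique size $d \leq k-1$ and show that $a$ is an ESS. The pair $(a,a)$ is a symmetric NE because every pure strategy in $V \cup \{a\}$ earns $\rho$ against $a$, while $b$ and $c$ earn only $\tau < \rho$, so $\text{ext-supp}(a) = V \cup \{a\}$. Any best response $t \neq a$ then has the form $t = t_a e_a + \sum_{i \in V} t_i e_i$; writing $x = (t_i)_{i \in V}$ with $\|x\|_1 = 1 - t_a$ and expanding, the cross terms $t_a(1-t_a)\rho$ cancel to give
\begin{equation*}
U_1(a,t) - U_1(t,t) \;=\; (1-t_a)^2\,\lambda(k) \;-\; x^T A_G^{\tau,\rho}\, x.
\end{equation*}
Applying Corollary~\ref{C_MS} to $y = x/(1-t_a)$ bounds the quadratic form above by $(1-t_a)^2\bigl[\tau + (\rho-\tau)(d-1)/d\bigr]$, so the strict ESS inequality reduces to $\lambda(k) > \tau + (\rho-\tau)(d-1)/d$, whose binding case is $d = k-1$, i.e.\ $\tau < (k-2)(1-\rho) + 1/k$.

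For the $(\Rightarrow)$ direction, assume $G$ has a $k$-clique $C$ and rule out every candidate ESS. The strategy $s = a$ is eliminated by taking $t$ to be the uniform distribution on $C$: $t$ is a best response to $a$, and a direct Motzkin--Strauss calculation gives $y^T A_G^{\tau,\rho} y = \tau/k + \rho(k-1)/k$; the lower bound $\tau \geq (1-\rho)(n-1) \geq (1-\rho)(k-1)$ (using $k \leq n$) forces this to be $\geq \lambda(k)$, so $U_1(a,t) \leq U_1(t,t)$ and strict evolutionary stability fails. For any candidate $s \neq a$ one first argues that $b, c$ cannot appear in the support of a symmetric NE (their rows are dominated on the $V \cup \{a\}$ block), then splits on whether $a \in \text{supp}(s)$ and in each sub-case exhibits a best response $t \neq s$ for which the Motzkin--Strauss bound, now invoked with $d \geq k$, forces the ESS inequality to fail non-strictly.

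The main obstacle is the algebraic verification that the two prescribed intervals for $\rho$ are exactly the ranges for which the admissible window for $\tau$ -- the combination of $(1-\rho)(n-1) \leq \tau < (k-2)(1-\rho) + 1/k$ together with the tighter inequalities needed to eliminate non-$a$ candidates -- is simultaneously non-empty for every $k \leq n$. The split at $\rho = 1 - 1/(n+1)^2$ corresponds to whether the upper bound $\rho - (1-\sqrt{1-\rho})^2$ (case~1) or the linear bound $(1-\rho)(n-1) + 1/(n+1)$ (case~2) is binding; the two coincide exactly at the split point, each equalling $2n/(n+1)^2$. Reducing via monotonicity in $d$ and $k$ to the extremal values $d = k-1$ for $(\Leftarrow)$ and $d = k$ for $(\Rightarrow)$ is the delicate but routine step once these substitutions are made.
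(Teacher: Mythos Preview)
Your $(\Leftarrow)$ direction and the elimination of $a$ in the $(\Rightarrow)$ direction are essentially the paper's arguments, and your reduction of the parameter constraints to the extremal case $d=k-1$ (and then minimizing over $k$) is correct and matches how the thresholds $\rho-(1-\sqrt{1-\rho})^2$ and $(1-\rho)(n-1)+\tfrac{1}{n+1}$ arise.

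The genuine gap is in how you eliminate \emph{non-$a$} candidates in the $(\Rightarrow)$ direction. The assertion that ``$b,c$ cannot appear in the support of a symmetric NE (their rows are dominated on the $V\cup\{a\}$ block)'' is false: against any column $i\in V$, row $b$ earns $\rho$, which is the \emph{maximum} in that column, so $b$ is not dominated; indeed $(b,b)$ is itself a symmetric Nash equilibrium (every row gets payoff $\tau$ against $b$). What rules out $b$ and $c$ is not dominance but the \emph{second} ESS condition: since $b$ and $c$ have identical rows, one can redistribute any mass $s(b)+s(c)$ onto a single one of them to obtain $t\neq s$ with $U_1(t,s)=U_1(s,s)$ and $U_1(s,t)=U_1(t,t)$. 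Similarly, the remaining cases are not handled by Motzkin--Straus with $d\geq k$: if $\mathrm{supp}(s)\subseteq V$, the invader is $b$ (one has $U_1(b,s)=\rho>U_1(s,s)$, no clique hypothesis needed); and if $0<s(a)<1$, the invader is $a$, using $U_1(s,a)=\rho=U_1(a,a)$. These three claims hold for every $\tau<\rho$, independently of $k$ and of the clique structure of $G$, and together force any ESS to equal $a$; only then does the hypothesis ``$G$ has a $k$-clique'' enter, to knock out $a$ via the uniform mixture you already described. As written, your sketch would not succeed because the dominance step fails and the Motzkin--Straus route you propose for the remaining $s\neq a$ is neither spelled out nor needed.
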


\begin{proof}

Let $G=(V,E)$ be an undirected graph with maximum clique size $d$. We consider the game $\Gamma_{k,\tau,\rho}(G)$ above. Suppose $s$ is an ESS of $\Gamma_{k,\tau,\rho}(G)$.

For the reduction we will prove three claims by using contradiction, that taken together show that the only possible ESS $s$ of $\Gamma_{k,\tau,\rho}(G)$ is the pure strategy $a$. Here we should note that these three claims hold not only for the aforementioned intervals of $\tau$ and $\rho$, but for any $\tau,\rho \in \mathbb{R}$ for which $\tau < \rho$.

\begin{claim}[1]
The support of any possible ESS of $\Gamma_{k,\tau,\rho}(G)$ does not contain $b$ or $c$ ($supp(s) \cap \{b,c\} = \emptyset$).
\end{claim}

Suppose $supp(s) \cap \{b,c\} \neq \emptyset$ .

Let $t \neq s$ be a strategy with $t(i)=s(i)$ for $i \in V$, $t(y)=s(b)+s(c)$ and $t(y')=0$ where $y,y' \in \{b,c\}$ such that $y \neq y'$ and $s(y)=min\{s(b),s(c)\}$. Since $u_1(b,z)=u_1(c,z)$ for all $z \in S$,
\begin{align*}
&U_1(t,s) = \sum_{i \in V}t(i)U_1(i,s) + (t(b) + t(c))U_1(b,s) + t(a)U_1(a,s) \;, \\
&U_1(s,s) = \sum_{i \in V}s(i)U_1(i,s) + (s(b) + s(c))U_1(b,s) + s(a)U_1(a,s) \;,
\end{align*}
\\
which yields $U_1(t,s) = U_1(s,s)$ and so $t$ is a best response to $s$. Also,
\begin{align*}
&U_1(s,t) = \sum_{i \in V}s(i)U_1(i,t) + (s(b) + s(c))U_1(b,t) + s(a)U_1(a,t) \;, \\
&U_1(t,t) = \sum_{i \in V}t(i)U_1(i,t) + (t(b) + t(c))U_1(b,t) + t(a)U_1(a,t) \;,
\end{align*}
 which yields $U_1(s,t) = U_1(t,t)$. But this is a contradiction since it should be $U_1(s,t) > U_1(t,t)$ as $s$ is an ESS.

\begin{claim}[2]
The support of any possible ESS of $\Gamma_{k,\tau,\rho}(G)$ contains $a$ ($supp(s) \nsubseteq V$).
\end{claim}

Suppose $supp(s) \subseteq V$ .

Then, we denote by $A_G$ the adjacency matrix of the graph $G$.
\begin{align*}
U_1(s,s) = \sum_{i,j \in V}s(i)s(j)u_1(i,j) &= x^{T}A_{G,\tau,\rho}x \\
&\leq \tau + (\rho - \tau)\frac{d-1}{d} \quad \text{(by Corollary \ref{C_MS})}\\ 
&< \rho = U_1(b,s) \text{\qquad for every $\rho > \tau$ .}
\end{align*}

But this is a contradiction since $s$ is an ESS and therefore a NE. From Claim (1) and Claim (2), it follows that $a \in supp(s)$, i.e. $s(a)>0$ .

\begin{claim}[3]
$s(a)=1$ .
\end{claim}

Suppose $s(a)<1$ .

Since $(s,s)$ is a NE, $a$ is a best response to $s$ and $a \neq s$. Then $U_1(s,a) = \sum_{z \in supp(s)}s(z)u_1(s,a)=\rho=U_1(a,a)$. But this is also a contradiction since it should be $U_1(s,a) > U_1(a,a)$ as $s$ is an ESS. Therefore, the only possible ESS of $\Gamma_{k,\tau,\rho}(G)$ is the pure strategy $a$.

Now we show the following lemma, which concludes also the proof of Theorem \ref{EL_1}.
\begin{lemma}{\label{Lemma1}}
The game $\Gamma_{k,\tau,\rho}(G)$ with the requirements of Theorem \ref{EL_1} has an ESS (strategy $a$) if and only if there is no clique of size $k$ in graph $G$.
\end{lemma} 

\begin{proof}
We consider two cases for $k$:
\subsubsection{Case 1: $d<k$}. Let $t \neq a$ be a best response to $a$. Then $supp(t) \subseteq V \cup \{a\}$ .

Let $r= \sum_{i \in V}t(i)$. So $r>0$($t \neq a$) and $t(a)=1-r$ . Combining Corollary \ref{C_MS} and \ref{C_EL} we get,

\begin{align*}
U_1(t,t) - U_1(a,t) =& \sum_{i,j \in V}t(i)t(j)u_1(i,j) + r\cdot t(a) \cdot \rho + \\
& + t(a) \cdot r \cdot \frac{k-1}{k} + t(a)^2 \cdot \rho - \Big[r \cdot \frac{k-1}{k} + t(a) \cdot \rho \Big] \\ 
\leq & \Big[ \tau + (\rho - \tau) \frac{d-1}{d} \Big]r^2 + r(1-r) \cdot \rho + \\
& + (1-r)r \frac{k-1}{k} + (1-r)^2 \cdot \rho - r \frac{k-1}{k} - (1-r) \cdot \rho \\
= & \Big[ \tau + (\rho - \tau) \frac{d-1}{d} \Big]r^2 - \frac{k-1}{k} r^2 \\
= & \frac{r^2}{d} \Big[ \tau + \rho (d-1) - d \frac{k-1}{k} \Big] \\
= & \frac{r^2}{d}E \qquad \text{, where } E=\tau + \rho (d-1) - d \frac{k-1}{k} \;.
\end{align*}

If we can show that $E<0$ then strategy $a$ is an ESS. We now show why $E<0$:

Let us define the following function,
\begin{align*}
f(k,d,\rho) = d \frac{k-1}{k} - \rho (d-1)  \quad  & \text{, with the restrictions: } k \geq d+1 , 1 \leq d \leq n \\
& \text{ and } \rho \in (0,1) \;.
\end{align*}

Then we define the function $g(d,\rho)$:
\begin{align}
g(d,\rho) = \min_{k}f(k,d,r) &=  d \frac{d}{d+1} - \rho (d-1) = (1 - \rho)(d-1) + \frac{1}{d+1} \;.
\end{align}

By examining the first and second partial derivative with respect to variable $d$, we find the minimum of function $g(d,\rho)$:
\begin{align}
h(\rho) = \min_{d}g(d,\rho) = \rho - (1 - \sqrt{1-\rho})^2  \text{\qquad , for \quad} d^* = \frac{1}{\sqrt{1 - \rho}} - 1 \;.
\end{align}

Now there are two subcases. The maximum clique size may be impossible to reach the value of $d^*$, or it could reach it, depending on the size of $n=|V|$ . 

\paragraph{Subcase i)}  $n < \frac{1}{\sqrt{1 - \rho}} - 1$  or equivalently: $\rho > 1 - \frac{1}{(n+1)^2}$ \;.

From the partial derivatives of function $g(d,\rho)$ with respect to variable $d$ we know that it is a strictly decreasing function for $d<d^*$.
And given that $d \leq n$, from (1) we get: 
\
\begin{align}
h(\rho) = (1 - \rho)(n-1) + \frac{1}{n+1} \text{\qquad , for \quad} 1 - \frac{1}{(n+1)^2} < \rho < 1 \;.
\end{align}

\paragraph{Subcase ii)}  $n \geq \frac{1}{\sqrt{1 - \rho}} - 1$  or equivalently: $\rho \leq 1 - \frac{1}{(n+1)^2}$ \;.

By examining the first and second partial derivative with respect to variable $\rho$, we find the plot of function $h(\rho)$ to be:

\begin{figure}
\begin{center}
\includegraphics[scale=0.25]{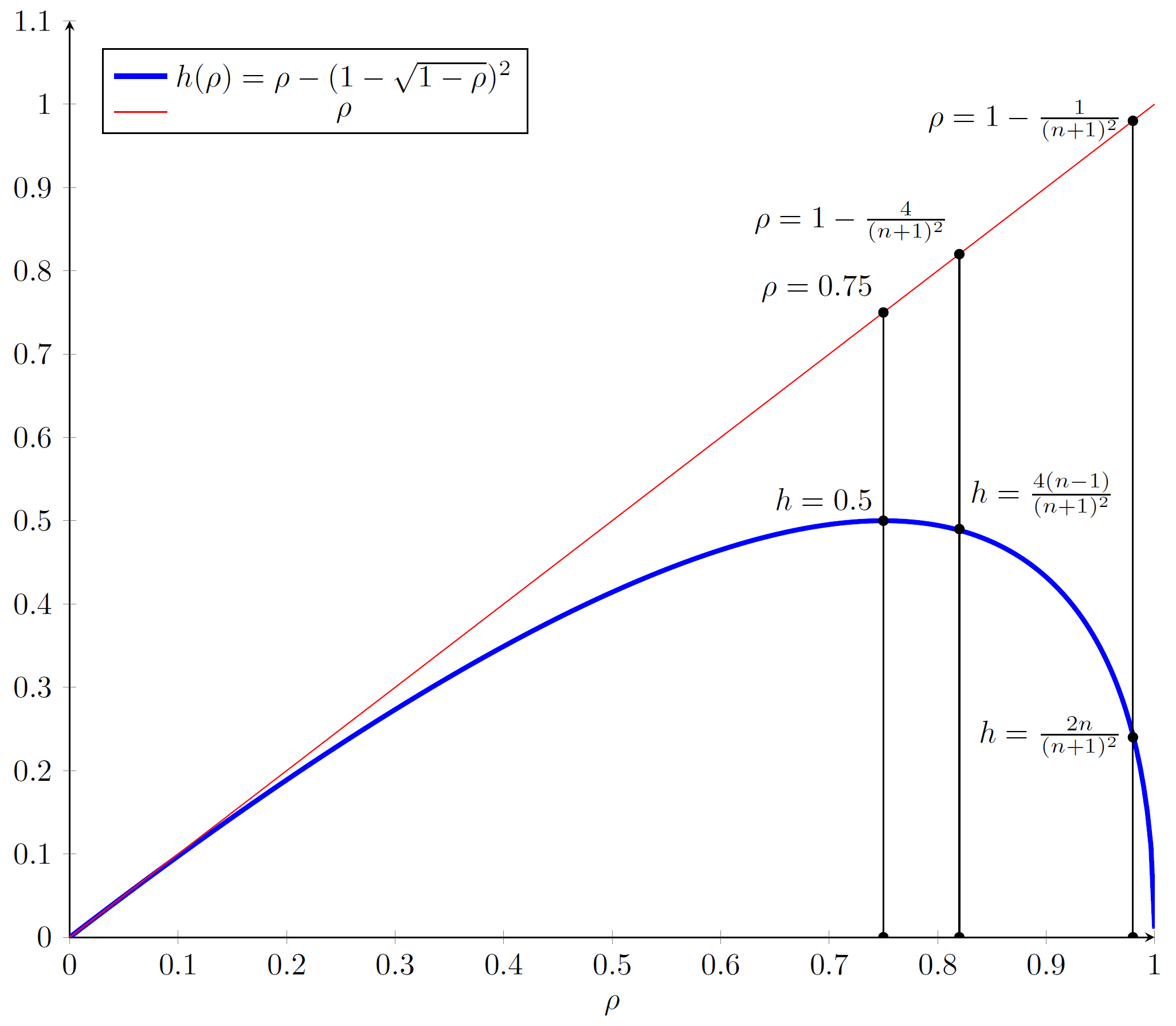}
\end{center}
\end{figure}

%
%
%
%

As we can see, the maximum of $h(\rho)$ is $\frac{1}{2}$ and it is achieved when $\rho = \frac{3}{4}$ .
\\
\\

\paragraph{Interval a)}  $\frac{3}{4} < \rho \leq 1 - \frac{1}{(n+1)^2}$ .

The monotonicity of $h(\rho)$ in this interval implies that its minimum is achieved for $\rho^*=1 - \frac{1}{(n+1)^2}$ . Thus if we want a minimum independent of $\rho$, from (2) we get:
\begin{align}
\min_{\rho}h(\rho) = 1 - \frac{1}{(n+1)^2} - \left(1 - \sqrt{1 - \left(1 - \frac{1}{(n+1)^2}\right)}\right)^2 = \frac{2n}{(n+1)^2} \;.
\end{align}

\paragraph{Interval b)}  $0 < \rho \leq \frac{3}{4}$ .

The monotonicity of $h(\rho)$ in this interval implies that there is no minimum point, but when $\rho$ gets arbitrarily close to zero then $h(\rho)$ goes arbitrarily close to zero as well, i.e. $\lim_{\rho \rightarrow 0^+}h(\rho) = 0$ .
\\
\\
To sum up:
\[
\tau^* = \min_{k,d}f(k,d,\rho) = 
\begin{cases} 
\rho - (1 - \sqrt{1-\rho})^2        &  \text{, if }   0 < \rho \leq 1 - \frac{1}{(n+1)^2}  \text{ , from (2)}\\
\\
(1 - \rho)(n-1) + \frac{1}{n+1}  &  \text{, if }   1 - \frac{1}{(n+1)^2} < \rho < 1     \text{ , from (3)}\\
\end{cases}
\] 

or if we want the minima to be independent of $\rho$ when possible:
\[
\tau^* = \min_{k,d,\rho}f(k,d,\rho) = 
\begin{cases} 
\rho - (1 - \sqrt{1-\rho})^2       &  \text{ , if}  \quad   0 < \rho \leq \frac{3}{4}  \\	
\\
\frac{2n}{(n+1)^2}	&  \text{ , if}  \quad   \frac{3}{4} < \rho \leq 1 - \frac{1}{(n+1)^2}    \text{ , from (4)}\\
\\
\frac{1}{n+1}		&  \text{ , if}  \quad   1 - \frac{1}{(n+1)^2} < \rho < 1       \text{ , from (3).}\\
\end{cases}
\] 
\\
Therefore, depending on the interval that $\rho$ belongs to, we can demand $\tau$ to be strictly less than $\tau^*$ , making  $U_1(t,t) - U_1(a,t)$ negative. We conclude that when $d<k$ then strategy $a$ is an ESS.

\subsubsection{Case 2: $d \geq k$}. Let $C \subseteq V$ be a clique of $G$ of size $k$. Then $t$ with $t(i)=\frac{1}{k}$ for $i \in C$ and $t(j)=0$ for $j \in S \setminus C$ is a best response to $a$ and $t \neq a$, and
\begin{align*}
&U_1(t,t) = \sum_{i,j \in C}t(i)t(j)u_1(i,j)= \frac{1}{k^2} \cdot (k-1)k \cdot \rho + \frac{1}{k^2}k \cdot \tau = \frac{(k-1) \rho + \tau}{k}  \;, \\
&U_1(a,t) = \frac{k-1}{k} \;.
\end{align*}

Then,
\begin{align*}
U_1(t,t) - U_1(a,t) &= \frac{1}{k} \Big[\tau - (1 - \rho)(k-1) \Big] \\
&= \frac{1}{k}E' \qquad \text{, where } E'=\tau - (1-\rho)(k-1) \;.
\end{align*}

If $E' \geq 0$ then $a$ cannot be an ESS. We explain why $E' \geq 0$:

Let's define the following function:
\begin{equation*}
y(k,\rho) =(1 - \rho)(k-1)  \text{\qquad , with the restrictions: } k \leq d  \text{ and } \rho \in (0,1) \;.
\end{equation*}

Then we define the function $z(d,\rho)$:
\begin{align*}
z(d,\rho) = \max_{k}y(k,\rho) &= (1 - \rho)(d-1) \;,
\end{align*}

so,
\begin{align*}
\tau^{**} = \max_{d}z(d,\rho) &= (1 - \rho)(n-1) \;.
\end{align*}

Now, given that $\tau$ needs to be at least $\tau^{**}$ but strictly less than $\tau^{*}$ the following should hold:
\begin{align*}
(1 - \rho)(n-1) < \rho - (1 - \sqrt{1-\rho})^2  \text{ \quad, or equivalently, \quad} \rho > 1 - \frac{4}{(n+1)^2} \;.
\end{align*}

So we conclude that when $d \geq k$ then strategy $a$ is not an ESS. This completes the proof of Lemma \ref{Lemma1} and Theorem \ref{EL_1}.

\end{proof}
\end{proof}

\subsection{An Interesting Consequence}
An interesting consequence of the analysis above, is the fact that if we could possess an algorithm, called \textbf{DecESS}, which decides in polynomial time whether a game $\Gamma_{k,\tau,\rho}(G)$ has an ESS, then the maximum clique size of graph $G$ can also be found in polynomial time using the following binary search algorithm.

\begin{algorithm}[H]
\caption{Binary clique search}\label{alg:bcs}
\begin{algorithmic}[1]
\State $min \gets 1$ \Comment{Initialization}
\State $max \gets n$
\While{$(min \not= max)$} \Comment{We have more than 1 candidates for max clique size}
\State $mid \gets \lceil (min + max)/2 \rceil$ \Comment{Search in the middle of the set}
\If{ DecESS($\Gamma_{mid,\tau',\rho'}(G)$)=''yes''} \Comment{By \pmb{Theorem \ref{EL_1}}...}
\State $max \gets mid-1$ \Comment{...max clique size is less than ''mid''}
\Else \Comment{By \pmb{Theorem \ref{EL_1}}...}
\State $min \gets mid$ \Comment{...max clique size is at least ''mid''}
\EndIf
\EndWhile\label{bcsendwhile}
\State \textbf{return} $min$ \Comment{The clique size is ''min''}
\end{algorithmic}
\end{algorithm}

In this algorithm we supposed there is an algorithm, called \textbf{DecESS}, that uses as input the game: $\Gamma_{mid,\tau',\rho'}(G)$, where the $mid$ value depends on the current $min$ and $max$ values of the algorithm, and $\tau'$ and $\rho'$ are picked in the intervals:
\begin{itemize}
\item $ \rho' \in \Big(1 - \frac{4}{(max+1)^2},\quad 1 - \frac{1}{(max+1)^2}\Big] \quad \text{and} \quad  \tau' \in \Big[(1 - \rho)(max-1),\quad \rho - (1 - \sqrt{1-\rho})^2\Big)  $ \\
\\
\text{or}
\item $ \rho' \in \Big(1 - \frac{1}{(max+1)^2},\quad 1 \Big) \quad \text{and} \quad  \tau' \in \Big[(1 - \rho)(max-1),\quad (1 - \rho)(max-1) + \frac{1}{max+1}\Big) $ ,
\end{itemize}
for the current value of $max$ in each loop of the algorithm. The output of \textbf{DecESS} is: ''yes'', if there exists an ESS in $\Gamma_{mid,\tau',\rho'}(G)$ and ''no'', otherwise. So we construct a new game $\Gamma_{mid,\tau,\rho}(G)$ every time $min$ or $max$ (and therefore $mid$) are changed. Note that while the binary search runs, the maximum possible clique size of the graph ($max$) changes, so, we can modify the intervals of our $\tau'$ and $\rho'$ as if we had a new graph with $|V|=max$ instead of $n$.

As the \textsc{clique} problem has been proved to be \textbf{NP}-complete, to find the maximum clique size of a given graph (\textsc{max-clique} problem) is \textbf{NP}-hard, thus, possession of the above mentioned algorithm would yield that \textbf{P}=\textbf{NP}.

To determine the time complexity of the \textbf{Binary clique search} algorithm let's suppose that the steps which \textbf{DecESS} needs are $R(n) \in O(n^w)$ for some constant $w$. From the algorithm we can derive the recurrent relation for the steps needed:
\begin{align*}
T(m) & = 4 + R(n) + T(\lceil \tfrac{m}{2} \rceil) \\
& = 4 + R(n) + 4 + R(n) + T\left(\left\lceil \tfrac{m}{4} \right\rceil\right) \\
& = ...  \\
\text{(the steps of} & \text{ \textbf{DecESS} are not dependent on the size $m$ of the search list)} \\
\text{and in general }\\
 T(m) & = (4 + R(n))i + T\left(\left\lceil \frac{m}{2^i} \right\rceil\right) \;.
\end{align*}

The base case is:  
\begin{equation*}
T\left(\left\lceil \frac{m}{2^i} \right\rceil\right) = T(1) 
\Rightarrow \left\lceil \frac{m}{2^i} \right\rceil = 1  \Rightarrow i = \lceil \log_2 m \rceil \;,
\end{equation*}

so, the latter equation is:
\begin{equation*}
 T(m) = (4 + R(n)) \lceil \log_2 m \rceil + T(1) \;.
\end{equation*}

Our initial condition is $T(1)=1$ .

Also, $m=n$. (We wrote $m$ instead of $n$ above because the steps $R(n)$ of \textbf{DecESS} do not depend on the size $m$ of the search list of each \textbf{Binary clique search}'s loop, they only depend on the number of $G$'s vertices $n=|V|$.) So, if we count the steps for the initialization of the variables along with the return command, the steps needed by \textbf{Binary clique search} are:
\begin{equation*}
 T(n) = (4 + R(n)) \lceil \log_2 n \rceil + 4 \;,
\end{equation*}

which yields:
\begin{equation*}
 T(n) \in O(n^{w}\log_2 n) \;.
\end{equation*}

To sum up, if we have a polynomial time algorithm \textbf{DecESS} which decides if the game $\Gamma_{k,\tau,\rho}(G)$ has an ESS, then the \textsc{max-clique} problem is solvable in polynomial time, as we can always reduce an undirected graph $G=(V, E)$ to $\lceil \log_2 n \rceil$ number of $\Gamma_{k,\tau,\rho}(G)$ games, each of them in polynomial time and eventually find the maximum clique size of $G$ in polynomial time using \textbf{Binary clique search}.

All in all, supposing the reduction from the graph to the game requires $O(n^{r})$ time for some constant $r$ (as shown by~\cite{EL}), then, the assumption of a \textbf{DecESS} in \textbf{P} yields that the \textsc{max-clique} problem requires $O(n^{r+w}\log_2 n)$ time and \textbf{P}=\textbf{NP}.

\begin{corollary}
The \textsc{ess} problem with payoff values in the domains given in Theorem \ref{EL_1} is \textbf{coNP}-hard.
\end{corollary}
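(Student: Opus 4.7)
The plan is to exhibit a polynomial-time many-one reduction from the complement of \textsc{clique} to \textsc{ess} using the construction $\Gamma_{k,\tau,\rho}(G)$ and Theorem \ref{EL_1}. Since \textsc{clique} is \textbf{NP}-complete, its complement (deciding whether $G$ has \emph{no} clique of size $k$) is \textbf{coNP}-complete, so any such reduction yields the desired \textbf{coNP}-hardness. The substantive work has already been done inside Theorem \ref{EL_1}, which establishes the biconditional ``$\Gamma_{k,\tau,\rho}(G)$ has an ESS iff $G$ has no clique of size $k$'' for every $(\tau,\rho)$ in the prescribed domains; what remains for the corollary is purely a matter of checking that the reduction can be computed in polynomial time.

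First I would recall the standard input conventions: an instance of co-\textsc{clique} is a pair $(G,k)$, and an instance of \textsc{ess} is a rational payoff matrix. Given $(G,k)$ with $n=|V|$, the map sends $(G,k)$ to the game $\Gamma_{k,\tau,\rho}(G)$ described in Section~\ref{Extension}, which has $n+3$ pure strategies and whose payoff matrix is completely determined by the adjacency structure of $G$ together with the two constants $\tau$ and $\rho$. Writing down this matrix takes time $O(n^2)$ once $\tau$ and $\rho$ are fixed, matching the $O(n^{r})$ bound noted earlier from \cite{EL}.

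The second step is to choose concrete rational $\tau$ and $\rho$ in the intervals of Theorem \ref{EL_1} whose bit-length is polynomial in $n$, so that the entire game has polynomial-size encoding. For example, one may fix $\rho:=1-\frac{1}{(n+1)^2}$, which lies in the first allowed range, and then take $\tau:=(1-\rho)(n-1)+\delta_n$ for a rational $\delta_n$ small enough that $\tau$ stays strictly below $\rho-(1-\sqrt{1-\rho})^{2}$; since the gap between these bounds is of order $1/\mathrm{poly}(n)$, such a $\delta_n$ with $O(\log n)$-bit numerator and denominator is easy to write down explicitly. Both $\rho$ and $\tau$ are then rationals of polynomial bit-length, so the reduction is computable in polynomial time.

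Finally, Theorem \ref{EL_1} applied to this choice of parameters gives that $\Gamma_{k,\tau,\rho}(G)$ has an ESS iff $(G,k)$ is a \textsc{yes}-instance of co-\textsc{clique}; hence any algorithm deciding \textsc{ess} on games whose payoffs lie in the domains of Theorem \ref{EL_1} also decides co-\textsc{clique}, establishing \textbf{coNP}-hardness. The only real subtlety, and the point I would be most careful about, is the polynomial-time verification: one must confirm that the chosen $(\tau,\rho)$ genuinely satisfy the strict inequalities bounding the intervals (no rounding pushes $\tau$ out of range) and that the resulting matrix can be encoded in polynomial bits. This is routine because every interval bound in Theorem \ref{EL_1} is a rational function of $n$ with width at least $\Omega(1/n^{2})$, leaving ample room for a short rational witness in each component.
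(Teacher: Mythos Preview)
Your proposal is correct and matches the paper's approach: the corollary is stated in the paper without proof, as the immediate consequence of Theorem~\ref{EL_1} (the biconditional between ESS existence in $\Gamma_{k,\tau,\rho}(G)$ and $G$ having no $k$-clique) together with the \textbf{NP}-completeness of \textsc{clique}. Your additional care about choosing rational $\tau,\rho$ of polynomial bit-length is the right detail to supply; note that since the $\tau$-interval is closed on the left you may simply take $\tau=(1-\rho)(n-1)=\tfrac{n-1}{(n+1)^2}$ and avoid the auxiliary $\delta_n$ altogether.
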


\section{Extending the Reduction with Respect to $\lambda(k)$}\label{Extending the reduction}
We now prove a generalization of the latter reduction for $\lambda(k) = 1- \frac{1}{k^x}$, with $x \geq 3$:

\begin{theorem}\label{EL_x}
Let $G=(V, E)$ be an undirected graph. The game $\Gamma_{k,\tau,\rho}^{x}(G)$ with $\lambda(k) = 1- \frac{1}{k^x}$, for $x \geq 3$ and 
\begin{itemize}
\item $ \rho \in \left(1 + \frac{n^{x-1}-2^x}{2^{x}n^{x-1}(n-1)}, \quad 1 + \frac{(n+1)^x - n2^x}{2^{x}(n+1)^{x}(n-1)} \right] \quad \text{and} \\ 
	\tau \in \left[(1 - \rho)(n-1) + 1 - \frac{1}{n^{x-1}},\quad 1 - \frac{1}{2^x}\right)  $ 

\text{or}
\item $ \rho \in \left(1 + \frac{(n+1)^x - n2^x}{2^{x}(n+1)^{x}(n-1)},\quad + \infty \right) \quad \text{and} \\
	\tau \in \left[(1 - \rho)(n-1) + 1 - \frac{1}{n^{x-1}},\quad (1-\rho)(n-1) + 1 - \frac{n}{(n+1)^x}\right) $
\end{itemize}
has an ESS if and only if G has no clique of size k.
\end{theorem}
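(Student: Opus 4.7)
My plan is to mimic the proof of Theorem \ref{EL_1} step by step, substituting the new payoff $\lambda(k) = 1 - 1/k^x$ and carefully re-doing the optimization that determines the admissible ranges of $\tau$ and $\rho$. A key observation up front is that the three claims in the proof of Theorem \ref{EL_1} ($\text{supp}(s)\cap\{b,c\}=\emptyset$, $a\in\text{supp}(s)$, and $s(a)=1$) only use the structural fact $\tau<\rho$ and the row/column pattern of the game, not the specific value of $\lambda(k)$. So they transfer verbatim and reduce the problem to: characterize when the pure strategy $a$ is an ESS of $\Gamma^{x}_{k,\tau,\rho}(G)$.

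For the ``no $k$-clique'' direction (Case 1, $d<k$) I would copy the algebraic manipulation of $U_1(t,t)-U_1(a,t)$ used in Lemma \ref{Lemma1}, which yields the same expression $\frac{r^2}{d}[\tau+\rho(d-1)-d\lambda(k)]$. Strategy $a$ is then an ESS iff $\tau<f(k,d,\rho):=d\lambda(k)-\rho(d-1)$ for every admissible $(k,d)$. Since $\lambda(k)=1-1/k^x$ is increasing in $k$, the inner minimum is attained at $k=d+1$, giving
\begin{equation*}
g(d,\rho)\;=\;(1-\rho)(d-1)+1-\frac{d}{(d+1)^x}.
\end{equation*}
The outer minimization over $1\le d\le n$ is the main analytic step: by inspecting $\partial g/\partial d$ I expect two regimes, one in which $g$ is minimized at $d=1$ (giving $1-1/2^x$) and one in which it is minimized at $d=n$ (giving $(1-\rho)(n-1)+1-n/(n+1)^x$). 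Equating these two candidate minima produces the threshold $\rho=1+\frac{(n+1)^x-n\,2^x}{2^x(n+1)^x(n-1)}$, which is exactly the dividing value of $\rho$ in the statement of Theorem \ref{EL_x} and accounts for the two upper bounds on $\tau$.

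For the ``has a $k$-clique'' direction (Case 2, $d\geq k$) I would take $t$ uniform on a $k$-clique, again as in Lemma \ref{Lemma1}, and compute
\begin{equation*}
U_1(t,t)-U_1(a,t)\;=\;\tfrac{1}{k}\!\left[\tau-(1-\rho)(k-1)-1+\tfrac{1}{k^{x-1}}\right].
\end{equation*}
For $a$ to fail to be an ESS the bracket must be non-negative for some $k\le d\le n$, so the tight requirement on $\tau$ is the maximum over $k$ of $(1-\rho)(k-1)+1-1/k^{x-1}$. A short monotonicity check (valid in both $\rho$-regimes produced by Case 1) pins this maximum at $k=n$, yielding the lower bound $\tau\geq(1-\rho)(n-1)+1-1/n^{x-1}$ in the statement.

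To close the argument I would combine the two bounds: in each of the two $\rho$-regimes I need to verify that the Case~2 lower bound on $\tau$ is strictly less than the Case~1 upper bound, so the interval is non-empty. The resulting feasibility inequality translates, after elementary algebra, into exactly the lower endpoint of $\rho$ stated in the theorem (the expressions involving $n^{x-1}-2^x$ and $(n+1)^x - n\,2^x$ arise from these two comparisons). The main obstacle I foresee is the second minimization in Case~1: controlling the sign of $\partial g/\partial d$ and of $\partial g/\partial\rho$ simultaneously tightly enough to recover both upper bounds on $\tau$ and the exact split point in $\rho$, without giving away slack that would shrink the advertised intervals.
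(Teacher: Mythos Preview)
Your proposal is correct and follows essentially the same route as the paper's proof: reuse the three structural claims (which only need $\tau<\rho$), then in Case~1 bound $U_1(t,t)-U_1(a,t)$ by $\frac{r^2}{d}E$, minimize $f$ first in $k$ (at $k=d+1$) and then in $d$, and in Case~2 use $t$ uniform on a $k$-clique and maximize $y(k,\rho)$ over $k$. The only point worth adding concerns your anticipated ``main obstacle'': the paper resolves the outer minimization of $g(d,\rho)=(1-\rho)(d-1)+1-\frac{d}{(d+1)^x}$ by computing $\partial^2 g/\partial d^2 = -x[(x-1)d-2]/(d+1)^{x+2}$, which is $\le 0$ for all $d\ge 1$ precisely when $x\ge 3$; concavity then forces the minimum to an endpoint $d\in\{1,n\}$, and comparing $g(1,\rho)=1-1/2^x$ with $g(n,\rho)$ directly yields the split value of $\rho$ with no slack lost.
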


\begin{proof}

Let $G=(V,E)$ be an undirected graph with maximum clique size $d$. We consider the game $\Gamma_{k,\tau,\rho}(G)$ defined in Subsection \ref{Extension}, with the only difference that now, we substitute payoffs of value $\frac{k-1}{k}$ with new payoffs $\frac{k^x-1}{k^x}$, meaning we make the change $k \leftarrow k^x$. Suppose $s$ is an ESS of $\Gamma_{k,\tau,\rho}^{x}(G)$.

In this case, the same analysis as in Subsection \ref{Extension} is similarly applied up to the point where we prove that the only possible ESS of $\Gamma_{k,\tau,\rho}^{x}(G)$ is the pure strategy $a$.
Now we proceed to show the following lemma, which concludes also the proof of Theorem \ref{EL_x}.
\begin{lemma}{\label{Lemma2}}
The game $\Gamma_{k,\tau,\rho}^{x}(G)$ with the requirements of Theorem \ref{EL_x} has an ESS (strategy $a$) if and only if there is no clique of size $k$ in graph $G$. 
\end{lemma}

\begin{proof}
We consider again two cases for $k$:

\subsubsection{Case 1: $d<k$}. Let $t \neq a$ be a best response to $a$. Then $supp(t) \subseteq V \cup \{a\}$.

Let $r= \sum_{i \in V}t(i)$. So $r>0, (t \neq a$) and $t(a)=1-r$. Combining Corollary \ref{C_MS} and \ref{C_EL} we get,

\begin{align*}
U_1(t,t) - U_1(a,t) =& \sum_{i,j \in V}t(i)t(j)u_1(i,j) + r\cdot t(a) \cdot \rho + \\
& + t(a) \cdot r \cdot \frac{k^{x}-1}{k^x} + t(a)^2 \cdot \rho - \Big[r \cdot \frac{k^{x}-1}{k^x} + t(a) \cdot \rho \Big] \\ 
\leq & \Big[ \tau + (\rho - \tau) \frac{d-1}{d} \Big]r^2 + r(1-r) \cdot \rho + \\
& + (1-r)r \frac{k^{x}-1}{k^{x}} + (1-r)^2 \cdot \rho - r \frac{k^{x}-1}{k^x} - (1-r) \cdot \rho \\
= & \Big[ \tau + (\rho - \tau) \frac{d-1}{d} \Big]r^2 - \frac{k^{x}-1}{k^{x}} r^2 \\
= & \frac{r^2}{d} \Big[ \tau - (1-\rho) (d-1) - (1- \frac{d}{k^{x}}) \Big] \\
= & \frac{r^2}{d}E \qquad \text{, where } E=\tau - (1-\rho) (d-1) - (1- \frac{d}{k^{x}}) \;.
\end{align*}

If we can show that $E<0$ then strategy $a$ is an ESS. We show why $E<0$:

Let's define the following function:
\begin{align*}
f(k,d,\rho) = (1-\rho)(d-1) + 1- \frac{d}{k^x}  & \text{ , with the restrictions: } k \geq d+1 , 1 \leq d \leq n , x \geq 3 \;.
\end{align*}

Then we define the function $g(d,\rho)$:
\begin{align*}
g(d,\rho) = \min_{k}f(k,d,r) &= (1-\rho)(d-1) + 1- \frac{d}{(d+1)^x} \;.
\end{align*}

Now, the first two partial derivatives of $g(d,\rho)$ with respect to variable $d$, are:
\
\begin{align*}
&\frac{\partial g(d,\rho)}{\partial d} = (1 - \rho) + \frac{(x-1)d-1}{(d+1)^{x+1}} \qquad & \\
&\frac{\partial^2 g(d,\rho)}{\partial d^2} = \frac{-x[(x-1)d-2]}{(d+1)^{x+2}} \qquad   & \text{, which is non-positive for } d \geq 1, x \geq 3 \;.
\end{align*}
This means that function $g$ has its minimum either for $d=1$ or $d=n$: 
\begin{align*}
& g(1,\rho) = 1 - \frac{1}{2^x} \\
& g(n,\rho) = (1-\rho)(n-1) + 1 - \frac{n}{(n+1)^x}
\end{align*}

If the minimum is $g(1,\rho)$:
\begin{align*}
g(1,\rho) \leq g(n,\rho) \text{, or equivalently, } \rho \leq 1 + \frac{(n+1)^x - n2^x}{2^{x}(n+1)^{x}(n-1)} \;.
\end{align*}

Then,
\begin{align*}
h(\rho) = \min_{d}g(d,\rho) = 1 - \frac{1}{2^x} \;.
\end{align*}

If the minimum is $g(n,\rho)$:
\begin{align*}
g(n,\rho) < g(1,\rho) \text{, or equivalently, } \rho > 1 + \frac{(n+1)^x - n2^x}{2^{x}(n+1)^{x}(n-1)} \;.
\end{align*}

Then,
\begin{align*}
h(\rho) = \min_{d}g(d,\rho) = (1-\rho)(n-1) + 1 - \frac{n}{(n+1)^x} \;.
\end{align*}
\\
So, following the notation we used in Subsection \ref{Extension}:
\[
\tau^* = \min_{k,d}f(k,d,\rho) = 
\begin{cases} 
1 - \frac{1}{2^x}        &  \text{, if }   \rho \leq 1 + \frac{(n+1)^x - n2^x}{2^{x}(n+1)^{x}(n-1)}  \\
\\
(1-\rho)(n-1) + 1 - \frac{n}{(n+1)^x}  &  \text{, if }   \rho > 1 + \frac{(n+1)^x - n2^x}{2^{x}(n+1)^{x}(n-1)}    \\
\end{cases}
\] 
Therefore, we can demand $\tau$ to be strictly less than $\tau^*$, making $U_{1}(t,t) - U_{1}(a,t)$ negative. We conclude that when $d<k$ then strategy $a$ is an ESS.

\subsubsection{Case 2: $d \geq k$}. Let $C \subseteq V$ be a clique of $G$ of size $k$. Then $t$ with $t(i)=\frac{1}{k}$ for $i \in C$ and $t(j)=0$ for $j \in S \setminus C$ is a best response to $a$ and $t \neq a$, and
\begin{align*}
&U_1(t,t) = \sum_{i,j \in C}t(i)t(j)u_1(i,j)= \frac{1}{k^2} \cdot (k-1)k \cdot \rho + \frac{1}{k^2}k \cdot \tau = \frac{(k-1) \rho + \tau}{k} \;, \\
&U_1(a,t) = \frac{k^{x}-1}{k^{x}} \;.
\end{align*}

Then,
\begin{align*}
U_1(t,t) - U_1(a,t) &= \frac{1}{k} \Big[\tau - (1 - \rho)(k-1) - (1 - \frac{1}{k^{x-1}}) \Big] \\
&= \frac{1}{k}E' \qquad \text{, where } E'=\tau - (1 - \rho)(k-1) - (1 - \frac{1}{k^{x-1}}) \;.
\end{align*}

If $E' \geq 0$ then $a$ cannot be an ESS. We explain why $E' \geq 0$:

Let's define the following function:
\begin{equation*}
y(k,\rho) = (1 - \rho)(k-1) + 1 - \frac{1}{k^{x-1}}  \text{\quad , with the restrictions: } k \leq d \;.
\end{equation*}

Then we define the function $z(d,\rho)$:
\begin{align*}
z(d,\rho) = \max_{k}y(k,\rho) = (1 - \rho)(d-1) + 1 - \frac{1}{d^{x-1}} \;,
\end{align*}

so,
\begin{align*}
\tau^{**} = \max_{d}z(d,\rho) = (1 - \rho)(n-1) + 1 - \frac{1}{n^{x-1}} \;.
\end{align*}

Now, given that $\tau$ needs to be at least $\tau^{**}$ but strictly less than $\tau^{*}$ the following should hold:
\begin{align*}
(1 - \rho)(n-1) + 1 - \frac{1}{n^{x-1}} < 1 - \frac{1}{2^x}  \text{ , or equivalently, } \rho > 1 + \frac{n^{x-1}-2^x}{2^{x}n^{x-1}(n-1)} \;.
\end{align*}

So we conclude that when $d \geq k$ then strategy $a$ is not an ESS. This completes the proof of Lemma \ref{Lemma2} and Theorem \ref{EL_x}.

\end{proof}
\end{proof}

\begin{corollary}
The \textsc{ess} problem with payoff values in the domains given in Theorem \ref{EL_x} is \textbf{coNP}-hard.
\end{corollary}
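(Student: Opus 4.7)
The plan is to derive this corollary directly from Theorem \ref{EL_x}, in exact analogy with the coNP-hardness corollary following Theorem \ref{EL_1}. Theorem \ref{EL_x} provides, for each fixed exponent $x \geq 3$, a construction that maps an arbitrary instance $(G,k)$ of the \textsc{clique} problem to the game $\Gamma_{k,\tau,\rho}^{x}(G)$ whose payoff matrix uses only values drawn from the domains specified in the theorem, and such that $\Gamma_{k,\tau,\rho}^{x}(G)$ possesses an ESS if and only if $G$ has \emph{no} clique of size $k$. This is precisely a many-one reduction from co-\textsc{clique} to the restricted \textsc{ess} problem.

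To carry the argument out, I would first fix $x=3$ (any constant in the allowed range suffices) and verify polynomial-time computability: the game has $n+3$ strategies, so the matrix has $O(n^2)$ entries, each drawn from a set of at most four rational numbers, and the construction is the one from Subsection \ref{Extension} with $k$ replaced by $k^{x}$. Second, I would exhibit polynomial-bit-length rationals $\tau,\rho$ lying within the intervals stipulated by Theorem \ref{EL_x}; the endpoints are explicit rational functions of $n$ (and the fixed $x$), the interval widths are $\Omega(1/\mathrm{poly}(n))$, so the interval midpoint (or any other canonical rational inside) can be written in polynomial bits and computed in polynomial time. Third, I would appeal to the classical \textbf{NP}-completeness of \textsc{clique}, which yields \textbf{coNP}-completeness of co-\textsc{clique}, and compose that fact with the above reduction to conclude \textbf{coNP}-hardness of the restricted \textsc{ess} problem.

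The main obstacle I anticipate is purely arithmetic bookkeeping in the second step: confirming that for every input size $n$ the prescribed intervals for $\tau$ and $\rho$ are non-empty (so that a valid choice exists) and that canonical rationals within them admit polynomial encoding. Non-emptiness is guaranteed by the strict inequality derived at the end of the proof of Lemma \ref{Lemma2} when comparing $\tau^{**}$ with $\tau^{*}$, namely $\rho > 1 + \frac{n^{x-1}-2^{x}}{2^{x}n^{x-1}(n-1)}$, which delimits the admissible range. The polynomial-bit-length property follows because each endpoint is a ratio of integers of magnitude at most $(n+1)^{O(x)}$, which is polynomial in $n$ once $x$ is a fixed constant. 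Once these two points are verified, the \textbf{coNP}-hardness of the restricted \textsc{ess} problem is an immediate consequence of Theorem \ref{EL_x}.
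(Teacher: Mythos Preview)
Your proposal is correct and matches the paper's approach: the paper does not give an explicit proof of this corollary, treating it as immediate from Theorem~\ref{EL_x} together with the \textbf{NP}-completeness of \textsc{clique}. Your additional bookkeeping on polynomial-time computability, non-emptiness of the $(\tau,\rho)$-intervals, and polynomial bit-length of admissible rationals simply makes explicit what the paper leaves implicit.
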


\section{Our Main Result}\label{Main result}
Now we can prove our main theorem:

\begin{theorem}{\label{T_MS}}
Any reduction as in Theorem \ref{EL_x} for $x=x_{0} \geq 3$ from the complement of the \textsc{clique} problem to the \textsc{ess} problem is robust under arbitrary perturbations of values in the intervals: 
\begin{align*}
& \tau \in \left[1 - \frac{1}{2^{x_0}} - D, 1 - \frac{1}{2^{x_0}} - D + B \right), \\
& \rho \in \left(1 + \frac{(n+1)^{x_0} - n2^{x_0}}{2^{x_0}(n+1)^{x_0}(n-1)}, 1 + \frac{(n+1)^{x_0} - n2^{x_0}}{2^{x_0}(n+1)^{x_0}(n-1)} + A \right), \\
& \lambda \in \left[1-\frac{1}{k^{x_0}}, 1-\frac{1}{k^{x_1}} \right],
\end{align*}
where $x_{1} \in \left(x_0 , x_{0}\log_{n}(n+1)\right)$, $C = \frac{(n+1)^{x_0}-n^{x_1}}{n^{x_1-1}(n+1)^{x_0}(n-1)}$, $D = C(n-1)$, any $A \in (0,C)$ and $B = (C-A)(n-1)$.
\end{theorem}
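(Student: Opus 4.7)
The plan is to re-run the proof of Lemma \ref{Lemma2} with the payoff values $\tau, \rho, \lambda(k)$ replaced by arbitrary perturbed values $\tau_w, \rho_w, \lambda_w$ drawn from the intervals specified in the statement, and to verify that the two sign conditions continue to hold uniformly. Claims 1--3 in the proof of Theorem \ref{EL_1} survive any such perturbation because they rely only on $\tau_w < \rho_w$, which is automatic here: every $\tau_w$ in the stated interval lies strictly below $1$ while every $\rho_w$ lies strictly above $1$. Hence the only candidate ESS of the perturbed game is still the pure strategy $a$, and it suffices to check that, for every admissible $(\tau_w, \rho_w, \lambda_w)$,
\begin{align*}
E_w &= \tau_w - (1-\rho_w)(d-1) - \bigl(1 - d(1-\lambda_w)\bigr) < 0 \quad\text{for all } d < k,\\
E'_w &= \tau_w - (1-\rho_w)(k-1) - \bigl(1 - k(1-\lambda_w)\bigr) \geq 0.
\end{align*}

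For Case 1 the inequality is tightest when $\tau_w$ sits at its supremum $1 - 1/2^{x_0} - D + B$, $\rho_w$ at its supremum (so $(1-\rho_w)(d-1)$ is most negative), and $\lambda_w$ at its infimum $1 - 1/k^{x_0}$ (so $d(1-\lambda_w)$ is largest). Substituting these values reproduces the expression $f(k,d,\rho_w)$ analyzed in Lemma \ref{Lemma2}, and the same partial-derivative analysis shows its minimum over $d \in [1,n]$ still occurs at one of the endpoints $d=1$ or $d=n$; since $\rho_w$ lies just above the Case 2 boundary $1 + \frac{(n+1)^{x_0} - n 2^{x_0}}{2^{x_0}(n+1)^{x_0}(n-1)}$, the binding minimum remains $1 - 1/2^{x_0}$, and the choice $B = (C-A)(n-1)$ ensures $\tau_w^{\sup}$ stays strictly below it. For Case 2 the worst instance is at the opposite extremes: $\tau_w^{\inf} = 1 - 1/2^{x_0} - D$, $\rho_w$ at its left endpoint, and $\lambda_w^{\sup} = 1 - 1/k^{x_1}$. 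Maximizing the resulting $(1-\rho_w^{\inf})(k-1) + 1 - 1/k^{x_1-1}$ over $k \leq n$ yields the threshold that $\tau_w^{\inf}$ must exceed, and the definitions $C = \frac{(n+1)^{x_0}-n^{x_1}}{n^{x_1-1}(n+1)^{x_0}(n-1)}$ and $D = C(n-1)$ are calibrated precisely to produce the required inequality in the worst case.

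The main obstacle will be the algebraic bookkeeping attached to simultaneous perturbation of three parameters. One must check that in Case 1 the discrete minimization over $d$ still pivots on the same endpoints as in Lemma \ref{Lemma2} even after enlarging the $\lambda_w$ range by the full amount $1/k^{x_0} - 1/k^{x_1}$, and that the perturbed slack between the Case 1 and Case 2 thresholds stays positive across the entire $\rho_w$ interval. This is where the hypothesis $x_1 \in (x_0, x_0 \log_n(n+1))$ is used: it is equivalent to $n^{x_1} < (n+1)^{x_0}$, hence $C > 0$, whence $B = (C-A)(n-1) > 0$ and the $\tau_w$ interval is nonempty. Once these algebraic inequalities are verified at the worst-case corners, monotonicity of $E_w$ and $E'_w$ in each of the three perturbed parameters propagates them to the entire product of intervals, and the reduction of Theorem \ref{EL_x} transfers verbatim to every matrix $W$, establishing the claimed robustness.
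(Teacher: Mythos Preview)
Your overall plan---reparametrize $\lambda_w$ by an exponent $x\in[x_0,x_1]$ and then verify the two sign conditions of Lemma~\ref{Lemma2} at the extreme corners of the $(\tau_w,\rho_w,\lambda_w)$ box---is exactly the paper's argument, only stated algebraically rather than through the geometric picture of intersecting validity regions. The identification of worst-case directions (Case~1: $\tau_w$ large, $\rho_w$ large, $\lambda_w$ small; Case~2: the opposite) is also correct.

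There is, however, a concrete error in your Case~1 analysis. You assert that because $\rho_w$ sits ``just above'' the threshold $R_{x_0}=1+\tfrac{(n+1)^{x_0}-n2^{x_0}}{2^{x_0}(n+1)^{x_0}(n-1)}$, the binding minimum of $f(k,d,\rho_w)$ over $d$ is still $g(1,\rho_w)=1-1/2^{x_0}$. This is backwards: by the case split in Lemma~\ref{Lemma2}, for every $\rho_w>R_{x_0}$ one has $g(n,\rho_w)<g(1,\rho_w)$, so the binding constraint is the \emph{slanted} bound
\[
\tau_w \;<\; (1-\rho_w)(n-1)+1-\frac{n}{(n+1)^{x_0}},
\]
which is strictly smaller than $1-1/2^{x_0}$ throughout the prescribed $\rho$-interval. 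Merely checking $\tau_w^{\sup}<1-1/2^{x_0}$ therefore does not suffice. What one must verify is that for every pair $(\tau_w,\rho_w)$ in the open rectangle, $\tau_w$ lies below the slanted line; equivalently, that the top-right corner $(\rho_w,\tau_w)=(R_{x_0}+A,\;1-1/2^{x_0}-D+B)$ sits on that line. Using $(1-R_{x_0})(n-1)=n/(n+1)^{x_0}-1/2^{x_0}$ and $-D+B=-A(n-1)$, the slanted bound at $\rho_w=R_{x_0}+A$ equals $1-1/2^{x_0}-A(n-1)=\tau_w^{\sup}$, so the corner is exactly on the boundary and the open rectangle lies strictly below it. This is precisely the content of the paper's Figure~\ref{fig2} computation; once you replace your incorrect ``binding minimum remains $1-1/2^{x_0}$'' step with this calculation, your argument and the paper's coincide.
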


\begin{proof}
We denote three partitions of the game's payoff matrix $U$: $U_\tau, U_\rho, U_\lambda$ disjoint sets, with $U_\tau \cup U_\rho \cup U_\lambda = U$ and values $\tau,\rho,\lambda$ of their entries respectively. Each set's entries have the same value. For every $\lambda \in \left[ 1-\frac{1}{k^{x_0}}, 1-\frac{1}{k^{x_1}} \right]$ there is a $x=-\log_k(1-\lambda)$ in the interval $[x_0, x_1]$ such that $\lambda = 1-\frac{1}{k^{x}}$, where $x_0 \geq 3$ and $x_1 \in (x_0, x_{0}\log_n(n+1))$. We will show that, for this $x$, any reduction with the values of $\tau, \rho$ in the respective intervals stated in Theorem \ref{EL_x}, is valid.

In Figure \ref{fig1}, we show the validity area of $\tau$ depending on $\rho$ with parameter $x$, due to Theorem \ref{EL_x}. The thin and thick plots bound the validity area (shaded) for $x=x_0$ and $x=x_1$ respectively.

\begin{figure}[!ht]
    \centering
    \includegraphics[scale=0.75]{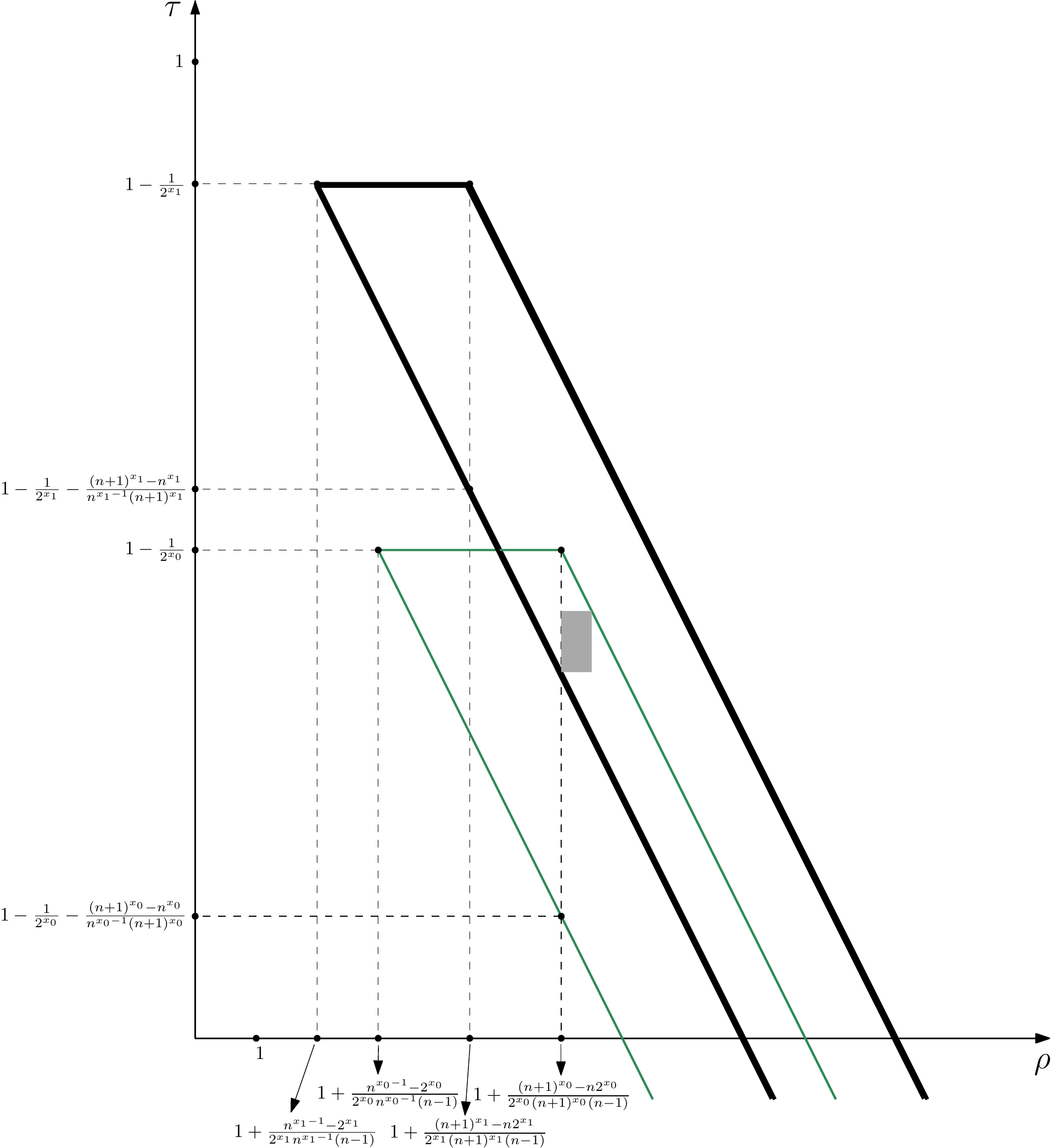}
    \caption{The validity area of $\tau$ and $\rho$ with parameter $x$.}
    \label{fig1}
\end{figure}

While $x$ increases, the parallel lines of the lower and upper bound of $\tau$ move to the right, the horizontal line of the upper bound of $\tau$ moves up, and the left acute angle as well as the top obtuse angle of the plot move to the left (by examination of the monotonicity of those bounds with respect to $x$).

The lower bound of $\tau$ for an $x=x' > x_0$ equals the upper bound of $\tau$ for $x=x_0$, when $x'=x_0\log_n(n+1)$. \textbf{Thus, for all} $\mathbf{x \in (x_0, x_0\log_n(n+1))}$ \textbf{there is a non-empty intersection between the validity areas.} We have picked an $x=x_1 \in (x_0, x_0\log_n(n+1))$.

In Figure \ref{fig2}, we show a zoom-in of the intersection of the validity areas of Figure \ref{fig1}. Let the intersection of lines: $1-\frac{1}{2^{x_0}}$ , $(1-\rho)(n-1) + 1 - \frac{1}{n^{x_1-1}}$ be at point $\rho=\rho_C$.

\begin{figure}[!ht]
    \centering
    \includegraphics[scale=0.75]{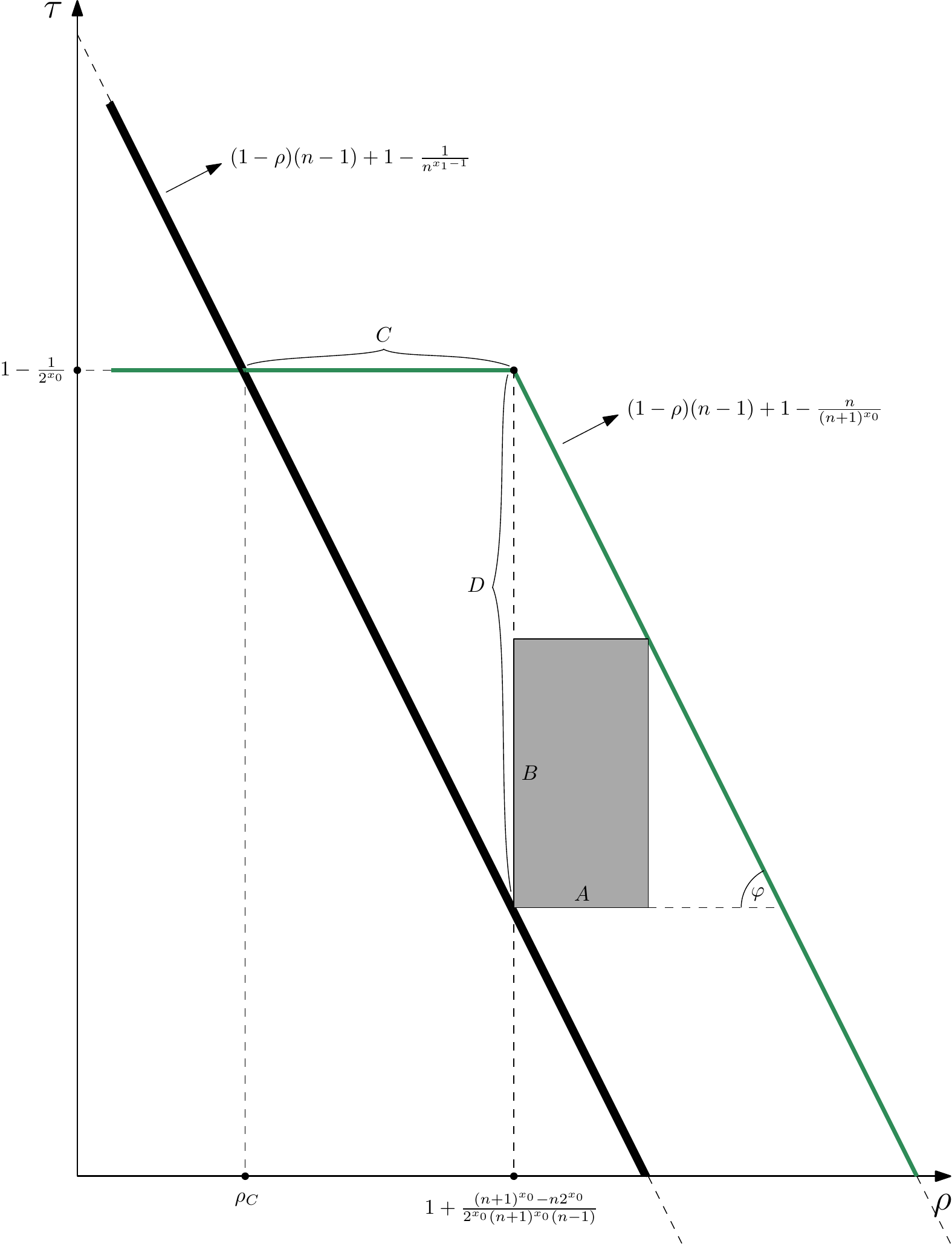}
    \caption{Detail of the validity areas' intersection and the $\rho$, $\tau$ robust area (shaded).}
    \label{fig2}
\end{figure}

 Then, 
\begin{align*}
(1-\rho_C)(n-1) + 1 - \frac{1}{n^{x_1-1}} = 1-\frac{1}{2^{x_0}} \\ 
\text{or equivalently, } \rho_C = 1 - \frac{1}{2^{x_0}(n-1)} - \frac{1}{n^{x_1-1}(n-1)} \;.
\end{align*}

So, 
\begin{align*}
C = 1 + \frac{(n+1)^{x_0} - n2^{x_0}}{2^{x_0}(n+1)^{x_0}(n-1)} - \rho_C \text{, or equivalently, } C = \frac{(n+1)^{x_0}-n^{x_1}}{n^{x_1-1}(n+1)^{x_0}(n-1)} .
\end{align*}

From the upper bound of $\tau$ as a function of $\rho$ we can see that $\tan \varphi = n-1$. Thus,
\begin{align*}
D = C \tan \varphi \text{, or equivalently, } D = \frac{(n+1)^{x_0}-n^{x_1}}{n^{x_1-1}(n+1)^{x_0}} .
\end{align*}

Now we can pick any $A \in (0,C)$. So, it must be 
\begin{align*}
B = (C-A) \tan \varphi \text{, or equivalently, } B = (n-1)(C-A). 
\end{align*}

For the rectangle with sides $A,B$ shown in Figure \ref{fig2}, the reduction is valid for all $x \in [x_0,x_1]$, thus for all $\lambda \in \left[1-\frac{1}{k^{x_0}}, 1-\frac{1}{k^{x_1}} \right]$. This completes the proof of Theorem \ref{T_MS}.
\end{proof}

\section{Conclusions and Further Work}\label{Conclusions}

In this work we introduce the notion of reduction robustness under arbitrary perturbations within an interval and we provide a generalized reduction based on the one in \cite{EL} that proves \textbf{coNP}-hardness of \textsc{ess}. We demonstrate that our generalised reduction is robust, thus showing that the hardness of the problem is preserved even after certain arbitrary perturbations of the payoff values of the derived game. As a future work we would like to examine the robustness of reductions for other hard problems, especially game-theoretic ones.

\newpage

%
%
\bibliographystyle{abbrv}
\bibliography{references}

\begin{thebibliography}{10}

\bibitem{VC}
V.~Conitzer.
\newblock The exact computational complexity of evolutionarily stable
  strategies.
\newblock In {\em Web and Internet Economics - 9th International Conference,
  {WINE} 2013, Cambridge, MA, USA, December 11-14, 2013, Proceedings}, pages
  96--108, 2013.

\bibitem{EL}
K.~Etessami and A.~Lochbihler.
\newblock The computational complexity of evolutionarily stable strategies.
\newblock {\em Electronic Colloquium on Computational Complexity {(ECCC)}},
  (055), 2004.

\bibitem{RCL}
R.~Lewontin.
\newblock Evolution and the theory of games.
\newblock {\em Journal of Theoretical Biology}, 1(3):382 -- 403, 1961.

\bibitem{SP}
J.~{Maynard Smith} and G.~R. Price.
\newblock The logic of animal conflict.
\newblock {\em Nature}, 246(5427):15--18, 1973.

\bibitem{MS}
T.~S. Motzkin and E.~G. Straus.
\newblock {Maxima for graphs and a new proof of a theorem of Tur\'{a}n}.
\newblock {\em Canadian Journal of Mathematics}, 17:533--540, 1965.

\bibitem{JFN}
J.~Nash.
\newblock Non-cooperative games.
\newblock {\em Annals of Mathematics}, 54(2):286--295, 1951.

\bibitem{NN}
N.~Nisan.
\newblock A note on the computational hardness of evolutionary stable
  strategies.
\newblock {\em Electronic Colloquium on Computational Complexity {(ECCC)}},
  13(076), 2006.

\bibitem{PY84}
C.~Papadimitriou and M.~Yannakakis.
\newblock The complexity of facets (and some facets of complexity).
\newblock {\em Journal of Computer and System Sciences}, 28(2):244 -- 259,
  1984.

\bibitem{MES}
M.~E. Schaffer.
\newblock Evolutionarily stable strategies for a finite population and a
  variable contest size.
\newblock {\em Journal of Theoretical Biology}, 132(4):469 -- 478, 1988.

\bibitem{HRW}
B.~W. Sergiu~Hart, Yosef~Rinott.
\newblock Evolutionarily stable strategies of random games, and the vertices of
  random polygons.
\newblock {\em The Annals of Applied Probability}, 18(1):259--287, 2008.

\bibitem{JMS}
J.~M. Smith.
\newblock The theory of games and the evolution of animal conflicts.
\newblock {\em Journal of Theoretical Biology}, 47(1):209 -- 221, 1974.

\bibitem{DSST}
D.~A. Spielman and S.~Teng.
\newblock Smoothed analysis: an attempt to explain the behavior of algorithms
  in practice.
\newblock {\em Commun. {ACM}}, 52(10):76--84, 2009.

\end{thebibliography}

\end{document}